\newcommand{\fig}[1]{Fig.\ \ref{#1}}
\newtheorem{corollary}{Corollary}
\newtheorem{lemma}{Lemma}
\newtheorem{remark}{Remark}
\newcommand{\bb}[1]{\mathbb{#1}}
\newcommand{\bF}[1]{\mathbf{#1}}
\newcommand{\Cal}[1]{\CMcal{#1}}
\newcommand\blfootnote[1]{%
  \begingroup
  \renewcommand\thefootnote{}\footnote{#1}%
  \addtocounter{footnote}{-1}%
  \endgroup
}
\begin{document}

\title{\huge Secure Transmission for Hierarchical Information Accessibility in Downlink MU-MIMO}

\author{Kanguk~Lee, Jinseok~Choi, Dong~Ku~Kim, and Jeonghun~Park
\thanks{K. Lee and J. Park are with the School of Electronics Engineering, College of IT Engineering, Kyungpook National University, Daegu, 41566, South Korea (e-mail: {\texttt{kanguk.lee@knu.ac.kr} and \texttt{jeonghun.park@knu.ac.kr}}). J. Choi is with Department of Electrical Engineering, Ulsan National Institute of Science and Technology, South Korea (e-mail: {\texttt{jinseokchoi@unist.ac.kr}}). D. Kim is with Department of Electrical and Electronic Engineering, Yonsei University, South Korea (e-mail: {\texttt{dkkim@yonsei.ac.kr}})
}
 \thanks{This work was supported by the National Research Foundation of Korea (NRF) grant funded by the Korea government (MSIT) (No. 2019R1G1A1094703) and (No. 2021R1C1C1004438).}

}
 
\maketitle \setcounter{page}{1} 

\begin{abstract}
Physical layer security is a useful tool to prevent confidential information from wiretapping. In this paper, we consider a generalized model of conventional physical layer security, referred to as hierarchical information accessibility (HIA). A main feature of the HIA model is that a network has a hierarchy in information accessibility, wherein decoding feasibility is determined by a priority of users. Under this HIA model, we formulate a sum secrecy rate maximization problem with regard to precoding vectors. This problem is challenging since multiple non-smooth functions are involved into the secrecy rate to fulfill the HIA conditions and also the problem is non-convex. To address the challenges, we approximate the minimum function by using the LogSumExp technique, thereafter obtain the first-order optimality condition. One key observation is that the derived condition is cast as a functional eigenvalue problem, where the eigenvalue is equivalent to the approximated objective function of the formulated problem. Accordingly, we show that finding a principal eigenvector is equivalent to finding a local optimal solution. To this end, we develop a novel method called generalized power iteration for HIA (GPI-HIA). Simulations demonstrate that the GPI-HIA significantly outperforms other baseline methods in terms of the secrecy rate. 
\end{abstract}

\section{Introduction}

\blfootnote{A part of this paper will be presented in IEEE Globecom 2021 \cite{lee:gc:21}.} 
As the amount of information delivered through a wireless medium rapidly increases, security in wireless communications becomes a critical issue to prevent leakage of confidential information. Due to the broadcast nature of a wireless medium, it is challenging to implement a secure communication system. Considering a cellular network, for example, if an eavesdropper is located within a coverage region, it is not possible to physically prevent the eavesdropper from overhearing the signals transmitted from a base station (BS). Classical approaches to protect the information from eavesdropping rely on cryptography \cite{massey:proc:88}. Unfortunately, it requires high implementation costs caused by the key distribution and complicated encryption algorithms. As a complement, physical layer security \cite{wyner:bell:75} has been gaining attention. 


From an information theoretical point-of-view, physical layer security shows that a transmitter can reliably send a confidential message to legitimate users with a positive rate, ensuring that an eavesdropper is not able to decode it provided that the eavesdroppers' channel quality is no better than that of the legitimate users' channel. 
This non-zero transmission rate is referred to as {\it{the secrecy rate}}. 
One underlying assumption of physical layer security so far is that two classes of receivers exist in a network: legitimate users and malignant eavesdroppers. 
Nevertheless, this traditional binary security configuration is limited in that it is infeasible to reflect a complicated security structure that will be used in 6G. 
Specifically, as applications of wireless communications have become diversified, a network can have a hierarchy in information access \cite{porambage:ojcomm:21}, so that even some legitimate users can be prohibited to decode particular messages depending on their security clearances.

Motivated by this, \cite{zhang:tvt:19} presented a new model that generalizes physical layer security by incorporating a hierarchical security structure. In this model, each user is assigned into a layer whose priority is different, and the decoding feasibility is determined by the layer. For example, users in a higher security level layer are permitted to decode a message intended to lower security level layers, while the opposite direction access is prohibited. We refer this model as hierarchical information accessibility (HIA). 
The HIA model is useful since it is a generalization of  a conventional physical layer security model.
Specifically, assuming that only two layers exist and a transmitter only delivers a message to one of the two layers, the corresponding HIA model is reduced to a conventional physical layer security setup that assumes legitimate users and illegal eavesdroppers. 
An efficient secure transmission method for the HIA, however, has not been known yet. 
Although \cite{zhang:tvt:19} investigated a transmit power minimization problem, a general secure transmission solution to maximize the sum secrecy rate for the HIA is missing. In this paper, we aim to fill this missing block by proposing a novel method.

\subsection{Prior Works}
There have been several prior works that developed secure precoding solutions to maximize the secrecy rate. Focusing on a multi-user setup considered in this paper, a strategy to exploit the multi-user interference in a beneficial way to degrade eavesdropper's channel quality was developed in \cite{li:tvt:18}. In \cite{zhao:tifs:15}, a sum secrecy rate optimization problem was formulated and a successive convex approximation technique was presented to relax the problem. 
In \cite{sheng:tsp:18}, assuming a wireless network that consists of multiple transmitter-receiver pairs and one eavesdropper, algorithms to maximize the secrecy rate and the secrecy energy efficiency were presented. 
Extending \cite{sheng:tsp:18}, considering multiple eavesdroppers, \cite{choi:wclett:20} proposed a secure transmission algorithm to maximize the sum secrecy rate. 
In \cite{choi:tvt:21}, when multiple eavesdroppers collude to decode confidential messages, an optimization framework to maximize the sum secrecy rate was proposed. 
In \cite{yang:tcom:13, yang:tcom:16}, secure antenna selection methods were investigated. 




In common, the aforementioned prior works considered conventional physical layer security, wherein two classes of receivers exist. 
Assuming the generalized HIA model, in \cite{zhang:tvt:19}, a precoding design to minimize the transmit power was developed under quality of service constraints. In \cite{ding:tcom:17}, a non-orthogonal multiple access (NOMA) scheme with multicast-unicast messages was considered. In this scenario, a power allocation with successive interference cancellation (SIC) was developed to enhance the security performance. Similar to this, a secure transmission with NOMA was also studied in \cite{li:tvt:17}. What is missing in the literature is a general precoding solution that maximizes the sum secrecy rate in the HIA; yet such a solution is necessary to reap de facto performance gains from the HIA model. 
Finding such a solution, however, is particularly difficult since a sum secrecy rate maximization problem is non-convex, where finding a global optimum solution is infeasible. Even worse, to guarantee the HIA condition, the multiple minimum (or maximum) functions are complicatedly intertwined into the secrecy information rate, which makes the problem harder to solve.



\subsection{Contributions}

{\color{black}In this paper, we put forth a secure precoding method to maximize the sum secrecy rate of HIA systems.}
We consider a single-cell downlink system, where a multiple-antenna base station (BS) serves multiple single-antenna users. In this system, the HIA is considered, wherein the users are assigned to a particular layer whose security level is different. It is required to ensure that the users in the higher layer are able to decode the messages intended to the lower layers, while the users in the lower layer cannot decode messages intended to the higher layers. 
Assuming $K$ layers, the BS sends $K$ independent messages, where the same message is intended to the users in the same layer, i.e., multi-group multicast message scenario \cite{hsu:tvt:17, sadeghi:twc:18}.
{\color{black} In such a setup, our main contributions are summarized as follows:}
\begin{itemize}
    
    \item  To accomplish the HIA condition, we adopt a notion of physical layer security. The secrecy rate of the message for layer $k$ is determined by the minimum value of the rates that can be achieved at the users in the higher layers ($\ge k$, to guarantee that the higher layer users can decode it), subtracted by the wiretapping lower layer users' rates ($< k$, to guarantee that the lower layer users cannot decode it). The wiretapping lower layer users' rate is determined differently depending on whether the lower layer users collude or not. Assuming the non-colluding case, the wiretapping channel's rate is determined by the maximum value of the rates that can be achieved at users in the lower layers. In the colluding case, the effective SINR of the wiretapping channel's rate is the sum of the SINR of each lower layer user.  
    
    \item Considering each non-colluding and colluding case, we characterize the achievable secrecy rates. Leveraging this, we formulate optimization problems to maximize the sum secrecy rate with regard to precoding vectors. Unfortunately, the formulated problems are challenging to solve since the multiple non-smooth minimum and maximum functions are complicatedly involved into the secrecy rate and the problems are non-convex. To resolve this, we first approximate our problem using the LogSumExp technique that makes the problem smooth. Then, we reformulate the problem as a form of Rayleigh quotients by rewriting the optimization variables onto a higher dimensional vector. With this form, we derive the first-order optimality condition and show that the derived optimality condition is cast as a functional eigenvalue problem. One remarkable point is that the corresponding matrix is a function of the eigenvector itself, and the eigenvalue is equivalent with the approximated objective function. Accordingly, finding the principal eigenvector is equivalent to finding the local optimal point that has zero gradient. 
    Based on this insight, we propose an algorithm inspired by power iteration, referred as generalized power iteration for HIA (GPI-HIA) that finds the principal eigenvector of the derived functional eigenvalue problems. 
    
    \item In the HIA model, due to the hierarchy of the information access, sequential decoding with SIC is a natural choice at the high layer users, which makes similarity between HIA and downlink NOMA systems. We show that the main problem to maximize the sum secrecy rate in the HIA model can be reduced to a sum rate maximization problem in typical downlink NOMA systems with a fixed decoding order. Specifically, by assuming that only one user is included in each layer and ignoring security to protect the higher layer message from the lower layer users, our HIA model becomes equivalent to downlink multi-antenna NOMA. Leveraging this, we also propose a sum rate maximization precoding method for downlink NOMA with a fixed decoding order.  
    
    \item In numerical results, we validate the performance of the proposed GPI-HIA. Thanks to the fact that the GPI-HIA properly incorporates the complicated rate conditions into its optimization process, it achieves more than $400\%$ secrecy rate gains over an existing convex relaxation based precoding method. 
    Further, we also show that a fairness issue caused by an imbalance between the layers is resolved by modifying the proposed method. 
    Additionally, we present that the proposed precoding method for downlink NOMA also provides considerable rate gains. In addition to those performance benefits, the proposed method does not require any off-the-shelf optimization solver such as CVX.
    In this sense, our method is beneficial not only in a performance perspective, but also in an implementation perspective.

\end{itemize}

\textit{Notation}:
the superscripts $(\cdot)^{\sf T}$, $(\cdot)^{\sf H}$, and $(\cdot)^{-1}$ denote the transpose, Hermitian, and matrix inversion, respectively. $\| \cdot \|_{\sf F}$ and $\| \cdot \|$ denote the Frobenious norm and the $\ell_2$ norm. 
${\bf{I}}_N$ is the identity matrix with size $N \times N$. Assuming that ${\bf{A}}_1, ..., {\bf{A}}_K \in \mathbb{C}^{N \times N}$, ${\bf{A}} = {\rm blkdiag}\left({\bf{A}}_1, ...,{\bf{A}}_k,..., {\bf{A}}_K \right) \in \mathbb{C}^{NK \times NK}$ is a block diagonal matrix.

\section{System Model} \label{sec:sys_model}

\subsection{Hierarchical Information Accessibility}
We consider a single-cell downlink MIMO network, where a BS equipped with $N$ antennas serves $M$ single antenna users. 
In our HIA model, there exist $K$ layers, where each layer includes a subset of the users. Denoting the user set as $\CMcal{M} = \{1,\cdots, M\}$ and the $k$-th layer as $\CMcal{L}_k$ for $k\in\{1,\cdots,K\}$, all the users are assigned to a specific layer, i.e., $\CMcal{M} = \bigcup_{k=1}^K \CMcal{L}_k$ and each user is not allocated to more than one layer,
i.e., $\CMcal{L}_i \cap \CMcal{L}_j = \emptyset, \; i\neq j$. The users in the same layer receive the same message, which corresponds to a multi-group multicast message scenario \cite{hsu:tvt:17,sadeghi:twc:18}. For instance, the BS transmits the message $s_k$ to the users in $\CMcal{L}_k$.

In the considered HIA model, the users assigned to different layers have different security priorities that determine information accessibility. This priority is indicated by the index of the layer. For example, if $i > j$, the users in $\CMcal{L}_i$ have higher priority than the users in $\CMcal{L}_j$. In the HIA, the users with the higher priority are able to access the lower priority information, i.e., the users in $\CMcal{L}_i$ can decode the message $s_j$ for $i\geq j$. On the contrary, the users with the lower priority are prohibited to access the information intended the higher priority users. That is to say, the users in $\CMcal{L}_j$ should not decode the message $s_i$ for $i > j$. From a viewpoint of the physical layer security, the users assigned to the lower priority layer are treated as eavesdroppers to the users in the higher priority layer.

We illustrate an example of the considered HIA system model in \fig{fig:system model}. As observed in the figure, the users in the higher priority layer, i.e., $\CMcal{L}_3$, are permitted to decode the message delivered to the lower priority layers, i.e., $\CMcal{L}_1$ and $\CMcal{L}_2$. On the contrary to that, the users in the lower priority layers are prohibited to decode the  message with  higher priority.

\begin{figure}[!t]
     \centerline{\resizebox{0.45\columnwidth}{!}{\includegraphics{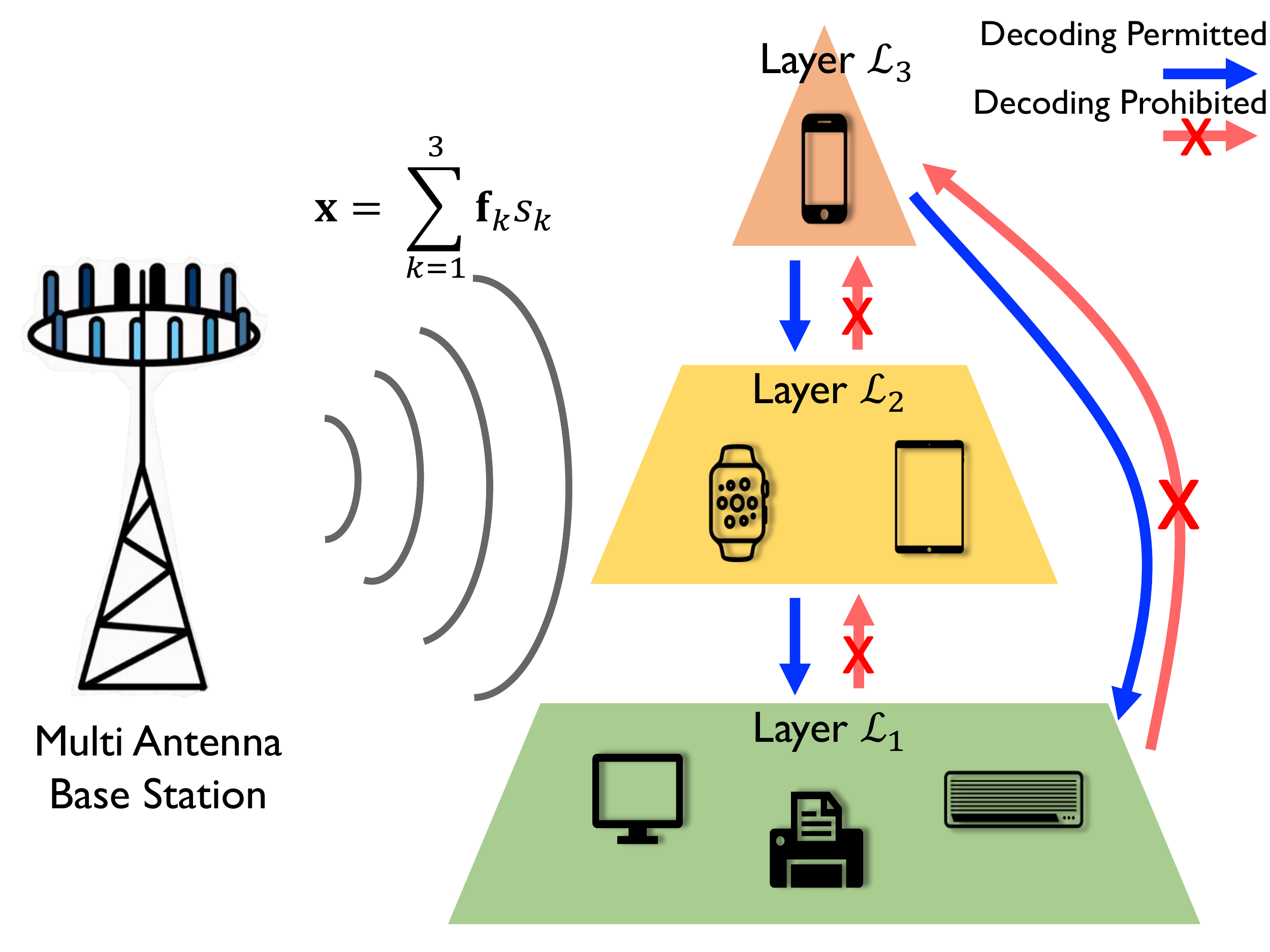}}}
     \caption{Generalized illustration of the considered HIA when $K=3$, $|\CMcal{L}_1| = 3$, $|\CMcal{L}_2| = 2$ and $|\CMcal{L}_3| = 1$. }
     \label{fig:system model}
     \vspace{-1 em}
\end{figure}

\subsection{Channel Model}
We denote the channel vector from the BS to the $i$-th user included in $\CMcal{L}_\ell$ as $\bF{h}_{i,\ell}\in \bb{C}^{N\times 1}$ for $i\in\CMcal{L}_\ell$. 
This channel vector follows the correlated Gaussian distribution, i.e., $\bF{h}_{i,\ell}\sim \mathcal{CN}(\bF{0},\bF{R}_{i,\ell})$ where $\bF{R}_{i,\ell} = \bb{E}\left[\bF{h}_{i,\ell}\bF{h}_{i,\ell}^{\sf H}\right]\in \bb{C}^{N\times N}$ is the channel covariance matrix. The channel covariance matrix is constructed according to the geometric one-ring scattering model \cite{adhikary:tit:13}. Specifically, assuming that the BS is equipped with uniform circular array of $N$ isotropic antennas and radius $\lambda D$ where $\lambda$ is wavelength and $D = \frac{0.5}{\sqrt{(1-\cos(2\pi/N))^2+\sin(2\pi/N)^2}}$, the channel correlation coefficient between $n$-th antenna and $m$-th antenna of ${\bf{R}}_{i, \ell}$ is obtained as 
 \begin{equation}\label{eq:correlation coefficient}
\left[\bF{R}_{i,\ell}\right]_{n,m} = \frac{\beta_{i,\ell}}{2\Delta_{i,\ell}}\int_{\theta_{i,\ell}-\Delta_{i,\ell}}^{\theta_{i,\ell}+\Delta_{i,\ell}}e^{-j\frac{2\pi}{\lambda}\Psi(x)(\bF{r}_n-\bF{r}_m)}\text{d}x,
\end{equation}
where $\beta_{i,\ell}$ is a large scale fading, $\Delta_{i,\ell}$ is a angular spread, $\theta_{i,\ell}\in [0,2\pi]$ is an angle-of-arrival (AoA), $\Psi(x) = [\cos(x),\sin(x)]$ is a wave vector for a planar wave colliding with the angular $x$, and $\bF{r}_n$ is a position vector of $n$-th antenna of BS. By employing the Karhunen-Loeve model as in \cite{adhikary:tit:13},  \cite{dai:twc:16}, the channel vector $\bF{h}_{i,\ell}$ is characterized as $\bF{h}_{i,\ell} = \bF{U}_{i,\ell}\bF{\Lambda}^{\frac{1}{2}}_{i,\ell}\bF{g}_{i,\ell}$, where $\bF{U}_{i,\ell}$ contains the eigenvectors of $\bF{R}_{i,\ell}$, $\bF{\Lambda}_{i,\ell}$ is a diagonal matrix whose elements are non-zero eigenvalues of $\bF{R}_{i,\ell}$, and $\bF{g}_{i,\ell}$ is an independent and identically distributed channel vector drawn from $\mathcal{CN}({\bf 0},{\bf{I}})$. Further, we consider a block fading channel, so that $\bF{g}_{i,\ell}$ is invariant during a channel coherence time. 
In this paper, we assume that the BS knows the perfect CSIT. 

\subsection{Signal Model}
As the multi-group multicast scenario is assumed, $K$ independent messages are transmitted from the BS \cite{hsu:tvt:17, sadeghi:twc:18}. Using linear precoding, the transmit signal ${\bf x}\in \bb{C}^{N\times 1}$ is represented as
\begin{equation}\label{eq:transmit signal}
    \bF{x} = \sum_{k=1}^K\bF{f}_ks_k,
\end{equation}
with the power constraint $\sum_{k=1}^K\|\bF{f}_k\|^2 = 1$ where $\bF{f}_k$ is a $N\times 1$ precoding vector for $s_k$.

Now we explain the decoding process in the HIA model. Recalling that the users in the higher priority can decode the messages sent to the lower priority layers, we assume that the decoding at each user proceeds sequentially from messages sent to the lowest layer to the highest layer, i.e., $1 \rightarrow K$. 
With this sequential decoding architecture, SIC is inherently exploited to eliminate the lower layer messages after decoding them. 
Using SIC, when the user in $\CMcal{L}_\ell$ attempts to decode $s_k$ where $k \le \ell$, the interference only comes from the messages intended to the layer $\CMcal{L}_{k+1}, \cdots, \CMcal{L}_{K}$. For example, letting $K = 3$, the user in $\CMcal{L}_3$ has an access to the messages $s_1$, $s_2$, and $s_3$. This user first attempts to decode $s_1$, while treating $s_2$ and $s_3$ as interference. Once the user succeeds to decode $s_1$, the user eliminates $s_1$ from the received signal and decodes $s_2$ with the reduced amount of the interference. After cancelling $s_2$, the user finally decodes $s_3$. 

Considering the SIC decoding process for the HIA model, the signal model of the user $i$ in $\CMcal{L}_\ell$ when decoding $s_k$ is written as 
 \begin{align}\label{eq:received signal}
     y_{k,i,\ell} = \underbrace{\bF{h}_{i,\ell}^{\sf H}\bF{f}_k s_k}_{\text{desired signal}}+\underbrace{\sum_{j = k+1}^K\bF{h}_{i,\ell}^{\sf H}\bF{f}_j s_j}_{\text{interference}} + n_{i,\ell},
\end{align}
where $n_{i,k}\sim \mathcal{CN}(0,\sigma^2)$ is additive white Gaussian noise. As explained above, we observe that the interference only comes from the higher layer messages, i.e., $k+1, \cdots, K$. Assuming that $s_k$ is drawn from a Gaussian distribution, i.e., $\mathcal{CN}(0,P)$ where $P$ is symbol power, the achievable rate of the message $s_k$ at the user $i$ in $\CMcal{L}_\ell$ is determined as 
\begin{equation}\label{eq : achievable rate for message}
	R_{k,i,\ell} = \log_2\left(1+\frac{\left| \bF{h}_{i,\ell}^{\sf H}\bF{f}_k\right|^2}{\sum_{j = k+1}^K\left|\bF{h}_{i,\ell}^{\sf H}\bF{f}_j\right|^2 + \frac{\sigma^2}{P}}\right).
\end{equation}

\subsection{Performance Metrics and Problem Formulation}

In this subsection, we characterize the secrecy rate for the HIA model and formulate main problems. To prevent the low priority users from decoding the high priority messages, we adopt a notion of physical layer security. From a perspective of the high priority message, the low priority users behave as eavesdroppers. Using physical layer security, by appropriately determining the secrecy rate as a function of the SINR of the low priority users, secure transmission is accomplished. The secrecy rate is determined in a different way depending on whether the low priority users collude or not. In the following, we characterize the secrecy rate considering the non-colluding and the colluding case respectively.

To simplify the secrecy rate performance characterization, we assume the worst case \cite{zhou:jsac:18, tian:splett:17}, where the eavesdroppers (low priority users) in $\CMcal{L}_{\ell'}$ already remove the messages $s_j$ when attempting to decode the message $s_k$, $\ell'<j<k$. 
We note that this assumption results in a lower bound performance as it increases the eavesdropper's SINR. 




\subsubsection{Non-Colluding Case}
first, we consider the non-colluding case. In this case, we assume that the low priority users do not cooperate to decode the high priority message. 
Then, the secrecy rate for $s_k$ is obtained as 
\begin{equation}\label{eq:information rate non colluding}
    \bar{R}_k^{\sf nc} = \left[\min_{i\in \CMcal{L}_\ell, \ell\geq k}R_{k,i,\ell} - \max_{i'\in \CMcal{L}_{\ell'}, \ell' < k}R_{k,i',\ell'} \right]^+.
\end{equation}    
The first term at the right-hand side (r.h.s.) of  \eqref{eq:information rate non colluding} is the minimum rate among the users allowed to decode $s_k$. This is for ensuring that all the users in the layers $\ell \ge k$ are able to decode $s_k$. The second term at the r.h.s. of \eqref{eq:information rate non colluding} is the maximum rate among the wiretapping users, i.e., the users in the layers that has no right to decode $s_k$. 
This is because, in the presence of multiple non-colluding eavesdroppers, no eavesdropper can decode a message when the strongest eavesdropper cannot decode it \cite{kampeas:isit:16, yang:tcom:16, choi:wclett:20}. For this reason, the required redundancy to protect $s_k$ is determined by the maximum rate among all the eavesdroppers, i.e., the users in the layers $\ell' < k$. 
Consequently, in the non-colluding case, the HIA condition is fulfilled when the information rate of $s_k$ is set as $\bar R_k^{\sf nc}$ in \eqref{eq:information rate non colluding}.

Now, we aim to maximize the sum secrecy rate with respect to the precoding vectors. Accordingly, the optimization in the non-colluding case is formulated as
\begin{align}
\max_{\bF{f}_1,\cdots,\bF{f}_K}&\sum\limits_{k=1}^K \bar{R}_k^{\sf nc} \label{eq : objective maximize align non coll}
\\
 \text{subject to }&\sum\limits_{k=1}^{K}||\bF{f}_{k}||^2  = 1.\label{eq : const non coll}
\end{align}
Unfortunately, it is infeasible to directly solve \eqref{eq : objective maximize align non coll} because of two main difficulties: {\it{i)}} \eqref{eq : objective maximize align non coll} is non-convex, wherein finding a global optimal solution is infeasible, {\it{ii)}} for the HIA condition, the rate of each message is determined in a complicatedly intertwined way, including the multiple minimum functions.

\subsubsection{Colluding Case}
next, we consider the colluding case; we assume that the low priority users cooperate to jointly decode the high priority messages. Then, the wiretapping channels form a virtual multiple receive antenna channel, so that the effective SINR is the sum of individual SINR of each eavesdropper \cite{choi:tvt:21}. Accordingly, the secrecy rate for $s_k$ is characterized as 
\begin{align} \label{eq:information rate colluding}
    \bar{R}_k^{\sf c} = \left[\min_{i\in \Cal{L}_\ell, \ell\geq k}R_{k,i,\ell} - R_{k,i',\ell'}^{\sf e} \right]^+,
\end{align}
where 
\begin{align}
    R_{k,i',\ell'}^{\sf e}=&\log_2\left(1+\sum_{\ell' = 1}^{k-1}\sum_{i'\in \Cal{L}_{\ell'}}\frac{\left| \bF{h}_{i',\ell'}^{\sf H}\bF{f}_k\right|^2}{\sum_{j = k+1}^{K}\left|\bF{h}_{i',\ell'}^{\sf H}\bF{f}_j\right|^2 + \frac{\sigma^2}{P}}\right).
\end{align}
We note that the difference of \eqref{eq:information rate colluding} from \eqref{eq:information rate non colluding} is the second term at the r.h.s., i.e., the achievable rate of the wiretapping channel. 
In the colluding case, the HIA condition is satisfied provided that the information rate of $s_k$ is set as $\bar R_k^{\sf c}$ in \eqref{eq:information rate colluding}.

As in the non-colluding case, we aim to maximize the sum secrecy rate with respect to the precoding vectors by solving the following problem:
\begin{align}
	\max_{\bF{f}_1,\cdots,\bF{f}_K}&\sum\limits_{k=1}^K \bar{R}_k^{\sf c} \label{eq : objective maximize align coll}
	\\
 	\text{subject to }&\sum\limits_{k=1}^{K}||\bF{f}_{k}||^2  = 1.\label{eq : const coll}
\end{align}
Similar to the non-colluding case, it is infeasible to directly solve \eqref{eq : objective maximize align coll} due to its non-convexity and non-smoothness.

In the next sections, we put forth the proposed methods to find a local optimal solution of \eqref{eq : objective maximize align non coll} and \eqref{eq : objective maximize align coll} by resolving the challenges.

\begin{remark} [HIA as a generalized model] \normalfont
The presented HIA model generalizes conventional physical layer security. For instance, assuming that $K = 2$ and ${\bf{f}}_1 = {\bf{0}}$, the users included in $\CMcal{L}_1$ are eavesdroppers attempting to overhear $s_2$ and the users in $\CMcal{L}_2$ are legitimate users receiving multicast message $s_2$. This is equivalent to conventional physical layer security with a multicast message setup \cite{li:icc:11}. 

In addition to that, the HIA model also can be reduced to a downlink NOMA scenario. For instance, we assume that a single-user is included in each layer, i.e., $|\CMcal{L}_k| = 1$ and ignore security, i.e., we do not care lower layer users to overhear higher layer messages. Then, the optimization problem \eqref{eq : objective maximize align non coll} reduces to a sum rate maximization problem of downlink NOMA, whose decoding order is fixed as $1 \rightarrow K$ \cite{zhu:tvt:20}. Accordingly, our method to solve \eqref{eq : objective maximize align non coll} is also useful in downlink NOMA.
We explain this in detail later.
\end{remark}

\section{Precoding Optimization in the Non-Colluding Case}

We explain the ideas to solve the optimization problems \eqref{eq : objective maximize align non coll} in the non-colluding case.
First, to convert the problem into a tractable form, we approximate the minimum and maximum functions involved in \eqref{eq : objective maximize align non coll} using the LogSumExp technique. Subsequently, we represent the approximated objective function as a form of Rayleigh quotients by rewriting the precoding vector onto a higher dimensional vector. With this form, we derive the first-order optimality condition and put forth an efficient algorithm to find the principal eigenvector exploiting our novel interpretation of the derived optimality condition through a lens of a functional eigenvalue problem. 



\subsection{Reformulation to a Tractable Form}

First, we approximate the non-smooth minimum function by using the LogSumExp technique. With the technique, the minimum and maximum functions are approximated as \cite{shen:tpami:10,nielsen:en:16}
\begin{equation}\label{eq : min logsum}
    \min_{i=1,\cdots,N}\{x_i\} \approx -\frac{1}{\alpha}\log\left(\sum_{i=1}^N \exp(-x_i \alpha) \right),
\end{equation}
\begin{equation}\label{eq : max logsum}
    \max_{i=1,\cdots,N}\{x_i\} \approx \frac{1}{\alpha}\log\left(\sum_{i=1}^N \exp(x_i \alpha) \right),
\end{equation}
where the approximation becomes tight as $\alpha \rightarrow \infty$. Leveraging \eqref{eq : min logsum} and \eqref{eq : max logsum}, we obtain the following approximations 
\begin{equation}\label{eq : min rate}
    \min_{i\in \CMcal{L}_\ell, \ell\geq k}R_{k,i,\ell} \approx -\frac{1}{\alpha}\log\left(\sum_{\ell = k}^K\sum_{i\in \CMcal{L}_{\ell}}e^{-\alpha R_{k,i,\ell}} \right),
\end{equation}
\begin{equation}\label{eq : max rate}
    \max_{i'\in \CMcal{L}_{\ell'}, \ell' < k}R_{k,i',\ell'} \approx \frac{1}{\alpha}\log\left(\sum_{\ell'=1}^{k-1}\sum_{i'\in\CMcal{L}_{\ell'}}e^{\alpha R_{k,i',\ell'}} \right).
\end{equation}

Next, we define a higher dimensional precoding vector $\bar{\bF{f}}$ by stacking the original precoding vectors $\bF{f}_1,\cdots,\bF{f}_K$. Accordingly, $\bar{\bF{f}}$ is given by 
\begin{equation}\label{eq : new f}
    \bar{\bF{f}} = \left[\bF{f}_1^{\sf T},\cdots,\bF{f}_K^{\sf T} \right]^{\sf T}\in \bb{C}^{NK\times 1}.
\end{equation}
With \eqref{eq : new f}, we rewrite the \eqref{eq : achievable rate for message} as 
\begin{align}
    R_{k,i,\ell} &=\log_2\left(1+\frac{\left| \bF{h}_{i,\ell}^{\sf H}\bF{f}_k\right|^2}{\sum_{j = k+1}^K\left|\bF{h}_{i,\ell}^{\sf H}\bF{f}_j\right|^2 + \frac{\sigma^2}{P}}\right)= \log_2\left(\frac{\sum_{j = k}^K
    \bF{f}_j^{\sf H}\bF{h}_{i,\ell}\bF{h}_{i,\ell}^{\sf H}\bF{f}_j+\frac{\sigma^2}{P}}{\sum_{j = k+1}^K\bF{f}_j^{\sf H}\bF{h}_{i,\ell}\bF{h}_{i,\ell}^{\sf H}\bF{f}_j + \frac{\sigma^2}{P}}\right)\nonumber\\
    &=\log_2\left(\frac{\bar{\bF{f}}^{\sf H}\bF{A}_{k,i,\ell}\bar{\bF{f}}}{\bar{\bF{f}}^{\sf H}\bF{B}_{k,i,\ell}\bar{\bF{f}}}\right),\label{eq : Rayleigh quotient achievable rate}
\end{align}
where 
\begin{align}
    & \bF{A}_{k,i,\ell} = \text{blkdiag}\left(\bF{0}, \cdots,\bF{0},\underbrace{\bF{h}_{i,\ell}\bF{h}_{i,\ell}^{\sf H}}_{k\text{th block}},\cdots, \bF{h}_{i,\ell}\bF{h}_{i,\ell}^{\sf H}\right)
    +\frac{\sigma^2}{P}\bF{I}_{NK} \in \bb{C}^{NK\times NK}, \label{eq : effective channel matrix A} \\
    & \bF{B}_{k,i,\ell} = \bF{A}_{k,i,\ell} - \text{blkdiag}\left(\bF{0},\cdots,\bF{0},\underbrace{\bF{h}_{i,\ell}\bF{h}_{i,\ell}^{\sf H}}_{k\text{th block}},\bF{0},\cdots,\bF{0} \right)  \in\bb{C}^{NK\times NK}. \label{eq : effective channel matrix B}
\end{align}
Subsequently, leveraging \eqref{eq : Rayleigh quotient achievable rate}, \eqref{eq : min rate} is represented as
\begin{align} \label{eq : min rate Rayleigh}
    \min_{i\in \CMcal{L}_\ell, \ell\geq k}R_{k,i,\ell} &\approx -\frac{1}{\alpha}\log\left(\sum_{\ell = k}^K\sum_{i\in\CMcal{L}_{\ell}}e^{-\alpha R_{k,i,\ell}} \right) = -\frac{1}{\alpha}\log\left(\sum_{\ell = k}^K\sum_{i\in\CMcal{L}_{\ell}}e^{-\alpha \log_2\left(\frac{\bar{\bF{f}}^{\sf H}\bF{A}_{k,i,\ell}\bar{\bF{f}}}{\bar{\bF{f}}^{\sf H}\bF{B}_{k,i,\ell}\bar{\bF{f}}}\right)} \right)\nonumber\\
    &= -\frac{1}{\alpha}\log\left(\sum_{\ell = k}^K\sum_{i\in\CMcal{L}_{\ell}}\left(\frac{\bar{\bF{f}}^{\sf H}\bF{A}_{k,i,\ell}\bar{\bF{f}}}{\bar{\bF{f}}^{\sf H}\bF{B}_{k,i,\ell}\bar{\bF{f}}} \right)^{-\beta}\right),
\end{align}
where $\beta = \alpha\frac{1}{\log2}$.
Similar to this, \eqref{eq : max rate} is obtained as 
\begin{align} \label{eq : max rate Rayleigh}
        \max_{i'\in \CMcal{L}_\ell', \ell'< k}R_{k,i',\ell'} &\approx\frac{1}{\alpha}\log\left(\sum_{\ell' = 1}^{k-1}\sum_{i'\in\CMcal{L}_{\ell'}}e^{\alpha R_{k,i',\ell'}} \right) = \frac{1}{\alpha}\log\left(\sum_{\ell' = 1}^{k-1}\sum_{i'\in\CMcal{L}_{\ell'}}e^{\alpha \log_2\left(\frac{\bar{\bF{f}}^{\sf H}\bF{A}_{k,i',\ell'}\bar{\bF{f}}}{\bar{\bF{f}}^{\sf H}\bF{B}_{k,i',\ell'}\bar{\bF{f}}}\right)} \right)\nonumber\\
    &= \frac{1}{\alpha}\log\left(\sum_{\ell' = 1}^{k-1}\sum_{i'\in\CMcal{L}_{\ell'}}\left(\frac{\bar{\bF{f}}^{\sf H}\bF{A}_{k,i',\ell'}\bar{\bF{f}}}{\bar{\bF{f}}^{\sf H}\bF{B}_{k,i',\ell'}\bar{\bF{f}}} \right)^{\beta}\right).
\end{align}
Combining \eqref{eq : min rate Rayleigh} and \eqref{eq : max rate Rayleigh}, the secrecy rate of the message $s_k$ is approximately 
\begin{equation}\label{eq : information rate rayleigh quotient without ec}
    \bar{R}_k^{\sf nc} \approx -\frac{1}{\alpha}\log\left(\sum_{\ell = k}^K\sum_{i\in\CMcal{L}_{\ell}}\left(\frac{\bar{\bF{f}}^{\sf H}\bF{A}_{k,i,\ell}\bar{\bF{f}}}{\bar{\bF{f}}^{\sf H}\bF{B}_{k,i,\ell}\bar{\bF{f}}}\right)^{-\beta} \right)
    -\frac{1}{\alpha}\log\left(\sum_{\ell'=1}^{k-1}\sum_{i'\in\CMcal{L}_{\ell'}}\left(\frac{\bar{\bF{f}}^{\sf H}\bF{A}_{k,i',\ell'}\bar{\bF{f}}}{\bar{\bF{f}}^{\sf H}\bF{B}_{k,i',\ell'}\bar{\bF{f}}}\right)^{\beta}\right).
\end{equation}
Finally, the problem \eqref{eq : objective maximize align non coll} is reformulated as
\begin{align} 
    \max_{\bar{\bF{f}}}&\sum_{k=1}^K\left(-\frac{1}{\alpha}\log\left(\sum_{\ell = k}^K\sum_{i\in \CMcal{L}_\ell}\left(\frac{\bar{\bF{f}}^{\sf H}\bF{A}_{k,i,\ell}\bar{\bF{f}}}{\bar{\bF{f}}^{\sf H}\bF{B}_{k,i,\ell}\bar{\bF{f}}}\right)^{-\beta} \right)-\frac{1}{\alpha}\log\left(\sum_{\ell'=1}^{k-1}\sum_{i'\in \CMcal{L}_{\ell'}}\left(\frac{\bar{\bF{f}}^{\sf H}\bF{A}_{k,i',\ell'}\bar{\bF{f}}}{\bar{\bF{f}}^{\sf H}\bF{B}_{k,i',\ell'}\bar{\bF{f}}}\right)^{\beta}\right)\right).\label{eq : objective rayleigh quotient non coll}
\end{align}
We note that the reformulated problem \eqref{eq : objective rayleigh quotient non coll} does not contain the transmit power constraint. This is because the objective function in \eqref{eq : objective rayleigh quotient non coll} is presented as a form of Rayleigh quotients.
Therefore 
the power constraint 
$\| \bar {\bf{f}} \|$ can be normalized in both of nominator and the denominator without affecting the approximated objective function. 
Now we are ready to tackle the problem in \eqref{eq : objective rayleigh quotient non coll}.
    
\subsection{First-Order KKT Condition}
In order to get an insight on the solution of the approximate problem \eqref{eq : objective rayleigh quotient non coll}, we drive a first-order KKT condition. The following lemma shows the main result in this subsection.
\begin{lemma} \label{lem:noncoll}
    In the non-colluding case, the first-order KKT condition of the  optimization problem \eqref{eq : objective rayleigh quotient non coll} is satisfied if the following holds.
    \begin{equation}\label{eq : eigenvalue problem non coll}
        {\bF{B}}_{\sf nc}^{-1}(\bar{\bF{f}}){\bF{A}}_{\sf nc}(\bar{\bF{f}})\bar{\bF{f}} = \lambda_{\sf nc}(\bar{\bF{f}})\bar{\bF{f}},
    \end{equation}
    where
    \begin{multline}\label{eq : A non coll}
        {\bF{A}}_{\sf nc}(\bar{\bF{f}}) \!=\!\lambda_{\sf nc}^{(\sf{num})}(\bar{\bF{f}})\!\times\!\sum_{k=1}^K\left(\frac{\sum\limits_{\ell=k}^{K}\sum\limits_{i \in \CMcal{L}_{\ell}}\left(\beta\left(\frac{\bar{\bF{f}}^{\sf H}\bF{A}_{k,i,\ell}\bar{\bF{f}}}{\bar{\bF{f}}^{\sf H}\bF{B}_{k,i,\ell}\bar{\bF{f}}}\right)^{-\beta}\frac{\bF{A}_{k,i,\ell}}{\bar{\bF{f}}^{\sf H}\bF{A}_{k,i,\ell}\bar{\bF{f}}} \right)}{\sum\limits_{\ell=k}^{K}\sum\limits_{i \in \CMcal{L}_{\ell}}\left(\frac{\bar{\bF{f}}^{\sf H}\bF{A}_{k,i,\ell}\bar{\bF{f}}}{\bar{\bF{f}}^{\sf H}\bF{B}_{k,i,\ell}\bar{\bF{f}}} \right)^{-\beta}} +\frac{\sum\limits_{\ell=1}^{k-1}\!\sum\limits_{i \in \CMcal{L}_{\ell}}\!\left(\beta\left(\frac{\bar{\bF{f}}^{\sf H}\!\bF{A}_{k,i,\ell}\bar{\bF{f}}}{\bar{\bF{f}}^{\sf H}\!\bF{B}_{k,i,\ell}\bar{\bF{f}}}\right)^{\beta}\frac{\bF{B}_{k,i,\ell}}{\bar{\bF{f}}^{\sf H}\!\bF{B}_{k,i,\ell}\bar{\bF{f}}} \right)}{\sum\limits_{\ell=1}^{k-1}\sum\limits_{i \in \CMcal{L}_{\ell}}\left(\frac{\bar{\bF{f}}^{\sf H}\bF{A}_{k,i,\ell}\bar{\bF{f}}}{\bar{\bF{f}}^{\sf H}\bF{B}_{k,i,\ell}\bar{\bF{f}}} \right)^\beta} \right),
    \end{multline}
    \begin{multline}\label{eq : B non coll}
        {\bF{B}}_{\sf nc}(\bar{\bF{f}}) \!=\!\lambda_{\sf nc}^{(\sf{den})}(\bar{\bF{f}})\times\sum_{k=1}^K\left(\frac{\sum\limits_{\ell=k}^{K}\sum\limits_{i \in \CMcal{L}_{\ell}}\left(\beta\left(\frac{\bar{\bF{f}}^{\sf H}\bF{A}_{k,i,\ell}\bar{\bF{f}}}{\bar{\bF{f}}^{\sf H}\bF{B}_{k,i,\ell}\bar{\bF{f}}}\right)^{-\beta}\frac{\bF{B}_{k,i,\ell}}{\bar{\bF{f}}^{\sf H}\bF{B}_{k,i,\ell}\bar{\bF{f}}} \right)}{\sum\limits_{\ell=k}^{K}\sum\limits_{i \in \CMcal{L}_{\ell}}\left(\frac{\bar{\bF{f}}^{\sf H}\bF{A}_{k,i,\ell}\bar{\bF{f}}}{\bar{\bF{f}}^{\sf H}\bF{B}_{k,i,\ell}\bar{\bF{f}}} \right)^{-\beta}}+\frac{\sum\limits_{\ell=1}^{k-1}\!\sum\limits_{i \in \CMcal{L}_{\ell}}\!\left(\!\beta\!\left(\frac{\bar{\bF{f}}^{\sf H}\!\bF{A}_{k,i,\ell}\bar{\bF{f}}}{\bar{\bF{f}}^{\sf H}\!\bF{B}_{k,i,\ell}\bar{\bF{f}}}\right)^{\beta}\!\frac{\bF{A}_{k,i,\ell}}{\bar{\bF{f}}^{\sf H}\!\bF{A}_{k,i,\ell}\bar{\bF{f}}} \right)}{\sum\limits_{\ell=1}^{k-1}\sum\limits_{i \in \CMcal{L}_{\ell}}\left(\frac{\bar{\bF{f}}^{\sf H}\bF{A}_{k,i,\ell}\bar{\bF{f}}}{\bar{\bF{f}}^{\sf H}\bF{B}_{k,i,\ell}\bar{\bF{f}}} \right)^\beta} \right),
    \end{multline}
    with
    \begin{align}\label{eq : lambda non coll}
     \lambda_{\sf nc}(\bar{\bF{f}})& = \sum_{k=1}^K\left(-\frac{1}{\alpha}\log\left(\sum_{\ell = k}^K\sum_{i\in \CMcal{L}_{\ell}}\left(\frac{\bar{\bF{f}}^{\sf H}\bF{A}_{k,i,\ell}\bar{\bF{f}}}{\bar{\bF{f}}^{\sf H}\bF{B}_{k,i,\ell}\bar{\bF{f}}}\right)^{-\beta} \right)
     -\frac{1}{\alpha}\log\left(\sum_{\ell'=1}^{k-1}\sum_{i'\in\CMcal{L}_{\ell'}}\left(\frac{\bar{\bF{f}}^{\sf H}\bF{A}_{k,i',\ell'}\bar{\bF{f}}}{\bar{\bF{f}}^{\sf H}\bF{B}_{k,i',\ell'}\bar{\bF{f}}}\right)^{\beta}\right)\right)\\
     &= \frac{\lambda_{\sf nc}^{(\sf{num})}(\bar{\bF{f}})}{\lambda_{\sf nc}^{(\sf {den})}(\bar{\bF{f}})}.
    \end{align}
\end{lemma}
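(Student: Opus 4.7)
The plan is to derive the first-order KKT condition by computing the Wirtinger gradient of the objective in \eqref{eq : objective rayleigh quotient non coll} with respect to $\bar{\bF{f}}^{\sf H}$ and setting it to zero. Since, as observed right after \eqref{eq : objective rayleigh quotient non coll}, the transmit power constraint is absorbed into the Rayleigh-quotient structure, no Lagrange multiplier is needed; the stationarity condition reduces to $\nabla_{\bar{\bF{f}}^{\sf H}}\lambda_{\sf nc}(\bar{\bF{f}}) = \bF{0}$, where $\lambda_{\sf nc}(\bar{\bF{f}})$ is the objective value defined in \eqref{eq : lambda non coll}.

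The key computational ingredient is the identity $\nabla_{\bar{\bF{f}}^{\sf H}}\log(\bar{\bF{f}}^{\sf H}\bF{M}\bar{\bF{f}}) = \bF{M}\bar{\bF{f}}/(\bar{\bF{f}}^{\sf H}\bF{M}\bar{\bF{f}})$, which, applied to each Rayleigh quotient $u_{k,i,\ell}/v_{k,i,\ell} = \bar{\bF{f}}^{\sf H}\bF{A}_{k,i,\ell}\bar{\bF{f}}/\bar{\bF{f}}^{\sf H}\bF{B}_{k,i,\ell}\bar{\bF{f}}$, gives
\begin{equation*}
\nabla_{\bar{\bF{f}}^{\sf H}} (u/v)^{\pm\beta} = \pm\beta\,(u/v)^{\pm\beta}\!\left[\frac{\bF{A}_{k,i,\ell}\bar{\bF{f}}}{u} - \frac{\bF{B}_{k,i,\ell}\bar{\bF{f}}}{v}\right].
\end{equation*}
Inserting this into the chain rule for $-\tfrac{1}{\alpha}\log(\sum (\cdot))$ in each of the two terms of \eqref{eq : information rate rayleigh quotient without ec} produces softmax-type weights $(u/v)^{\pm\beta}/\sum(u/v)^{\pm\beta}$, which are exactly the fractional factors that appear inside the large parentheses of \eqref{eq : A non coll} and \eqref{eq : B non coll}.

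The next step is to partition the resulting gradient expression according to whether each contribution is of the form $\bF{A}_{k,i,\ell}\bar{\bF{f}}/(\bar{\bF{f}}^{\sf H}\bF{A}_{k,i,\ell}\bar{\bF{f}})$ or of the form $\bF{B}_{k,i,\ell}\bar{\bF{f}}/(\bar{\bF{f}}^{\sf H}\bF{B}_{k,i,\ell}\bar{\bF{f}})$. Because the min-term carries exponent $-\beta$ while the max-term carries exponent $+\beta$, with both prefixed by $-\tfrac{1}{\alpha}\log$, careful sign bookkeeping shows that the $\bF{A}$-type pieces of the min-part and the $\bF{B}$-type pieces of the max-part collect on one side of the stationarity equation, whereas the $\bF{B}$-type pieces of the min-part and the $\bF{A}$-type pieces of the max-part collect on the other. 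Multiplying each side by the appropriate scalar normalization so that their ratio becomes $\lambda_{\sf nc}(\bar{\bF{f}}) = \lambda_{\sf nc}^{(\sf num)}(\bar{\bF{f}})/\lambda_{\sf nc}^{(\sf den)}(\bar{\bF{f}})$ identifies the two collected matrices as precisely $\bF{A}_{\sf nc}(\bar{\bF{f}})$ and $\bF{B}_{\sf nc}(\bar{\bF{f}})$ in \eqref{eq : A non coll}--\eqref{eq : B non coll}, so that the stationarity condition becomes $\bF{A}_{\sf nc}(\bar{\bF{f}})\bar{\bF{f}} = \lambda_{\sf nc}(\bar{\bF{f}})\bF{B}_{\sf nc}(\bar{\bF{f}})\bar{\bF{f}}$. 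Invertibility of $\bF{B}_{\sf nc}(\bar{\bF{f}})$, which follows from the strictly positive noise term $\tfrac{\sigma^2}{P}\bF{I}_{NK}$ embedded in every $\bF{B}_{k,i,\ell}$ through \eqref{eq : effective channel matrix B}, then delivers \eqref{eq : eigenvalue problem non coll}.

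The main obstacle is the algebraic bookkeeping rather than any conceptual step: one must carefully track the sign of each softmax weight (the outer $-\tfrac{1}{\alpha}$ combined with exponent $\pm\beta$ produces non-obvious cancellations) and pick the scalar normalization $(\lambda_{\sf nc}^{(\sf num)},\lambda_{\sf nc}^{(\sf den)})$ so that the extracted eigenvalue coincides with the objective value itself, as asserted in \eqref{eq : lambda non coll}. Once the signs and the normalization are pinned down, the remaining manipulation is mechanical and collapses into the claimed functional eigenvalue form.
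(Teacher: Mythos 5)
Your proposal is correct and follows essentially the same route as the paper's Appendix A.1: differentiate the LogSumExp-approximated objective $\lambda_{\sf nc}(\bar{\bF{f}})$ with respect to $\bar{\bF{f}}^{\sf H}$, set the gradient to zero, collect the $\bF{A}_{k,i,\ell}$-type and $\bF{B}_{k,i,\ell}$-type contributions (with the sign pattern you describe) into $\bF{A}_{\sf nc}(\bar{\bF{f}})$ and $\bF{B}_{\sf nc}(\bar{\bF{f}})$ scaled by $\lambda_{\sf nc}^{(\sf num)}$ and $\lambda_{\sf nc}^{(\sf den)}$, and rearrange into the generalized eigenvalue form. Your added remark on the invertibility of $\bF{B}_{\sf nc}(\bar{\bF{f}})$ via the $\tfrac{\sigma^2}{P}\bF{I}_{NK}$ term is a small justification the paper leaves implicit.
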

\begin{proof}
See Appendix A.1.
\end{proof}
Now we interpret the derived optimality condition \eqref{eq : eigenvalue problem non coll}. If the precoding vector $\bar {\bf{f}}$ satisfies \eqref{eq : eigenvalue problem non coll}, it satisfies the first-order optimality condition, implying that $\bar {\bf{f}}$ is located on a stationary point that has zero-gradient. This, however, does not guarantee that $\bar {\bf{f}}$ is a good solution. Among such points, we need to find a local optimal point that maximizes the objective function of \eqref{eq : objective rayleigh quotient non coll}. To this end, we first observe that \eqref{eq : eigenvalue problem non coll} can be cast as a generalized eigenvalue problem. Rigorously, \eqref{eq : eigenvalue problem non coll} is interpreted as a class of a eigenvector dependent nonlinear eigenvalue problem (NEPv) \cite{cai:sjmaa:18}. A distinguishable feature of NEPv compared to a typical eigenvalue problem is that a matrix is a function of eigenvectors. Based on this interpretation, ${\bF{B}}_{\sf nc}^{-1}(\bar{\bF{f}}){\bF{A}_{\sf nc}}(\bar{\bF{f}})$ behaves as the corresponding matrix, $\bar{\bF{f}}$ behaves as the eigenvector of the matrix ${\bF{B}}_{\sf nc}^{-1}(\bar{\bF{f}}){\bF{A}_{\sf nc}}(\bar{\bF{f}})$, and $\lambda_{\sf nc}(\bar{\bF{f}})$ behaves as the eigenvalue. Noticeably, the eigenvalue $\lambda_{\sf nc}(\bar {\bf{f}})$ is equivalent with the objective function that we want to maximize. For this reason, if we find the principal eigenvector of ${\bF{B}}_{\sf nc}^{-1}(\bar{\bF{f}}){\bF{A}_{\sf nc}}(\bar{\bF{f}})$, then it maximizes our objective function of \eqref{eq : objective rayleigh quotient non coll} while satisfying \eqref{eq : eigenvalue problem non coll}. Consequently, finding the principal eigenvector of ${\bF{B}}_{\sf nc}^{-1}(\bar{\bF{f}}){\bF{A}_{\sf nc}}(\bar{\bF{f}})$ is equivalent to finding the best local optimal point. 

Finding the principal eigenvector of ${\bF{B}}_{\sf nc}^{-1}(\bar{\bF{f}}){\bF{A}_{\sf nc}}(\bar{\bF{f}})$ is far from trivial. A main challenge comes from that the matrix ${\bF{B}}_{\sf nc}^{-1}(\bar{\bF{f}}){\bF{A}_{\sf nc}}(\bar{\bF{f}})$ changes depending on $\bar {\bf{f}}$. In the next subsection, we propose a novel method called GPI-HIA for the non-colluding case (GPI-HIA (Non-Coll)) so as to efficiently obtain the principal eigenvector. 

\subsection{GPI-HIA for the Non-Colluding Case}
We present the GPI-HIA (Non-Coll) method. Inspired from the typical power iteration method, the proposed method iteratively updates the precoding vector $\bar {\bf{f}}$ as 
\begin{equation}
\bar{\bF{f}}_{(t)} \leftarrow \frac{\bF{B}_{\sf nc}^{-1}(\bar{\bF{f}}_{(t-1)})\bF{A}_{\sf nc}(\bar{\bF{f}}_{(t-1)})\bar{\bF{f}}_{(t-1)}}{\left\|\bF{B}_{\sf nc}^{-1}(\bar{\bF{f}}_{(t-1)})\bF{A}_{\sf nc}(\bar{\bF{f}}_{(t-1)})\bar{\bF{f}}_{(t-1)} \right\|}.
\end{equation}
We repeat this process until the convergence criterion is met. The convergence condition is $\left\|\bar{\bF{f}}_{(t)}-\bar{\bF{f}}_{(t-1)} \right\|<\epsilon$, where $\epsilon$ is the tolerance level. Algorithm 1 summarizes the process.
\begin{algorithm} [t]
\caption{GPI-HIA (Non-Coll) } \label{alg:main nc} 
 {\bf{initialize}}: $\bar {\bf{f}}_{(0)}$ (MRT) \\
 Set the iteration count $t = 1$ \\
\While {$\left\|\bar {\bf{f}}_{(t)} - \bar {\bf{f}}_{(t-1)} \right\| > \epsilon$}{
 Construct the matrix ${\bF{A}}_{\sf nc}(\bar {\bF{f}}_{(t-1)})$ by using \eqref{eq : A non coll}. \\
 Construct the matrix ${\bf{B}}_{\sf nc} (\bar {\bF{f}}_{(t-1)})$ by using \eqref{eq : B non coll}. \\
 Compute $\bar {\bF{f}}_{(t)} = {\bF{B}}_{\sf nc}^{-1}(\bar{\bF{f}}_{(t-1)}){\bF{A}}_{\sf nc}(\bar{\bF{f}}_{(t-1)})\bar{\bF{f}}_{(t-1)}$. \\
 Normalize $\bar{\bf{f}}_{(t)} = \bar {\bf{f}}_{(t)}/\left\| \bar {\bf{f}}_{(t)}\right\|$.\\
 $t \leftarrow t+1$.}
 \Return{\ }{${\bf \bar f} = {\bf \bar f}_{(t)}$}.\\
\end{algorithm}

\begin{remark} \normalfont ($\alpha$ tuning)
The parameter $\alpha$ determines the accuracy of the LogSumExp approximation. As we use large $\alpha$, the approximation becomes tight; thereby using large $\alpha$ is desirable. 
Nevertheless, as presented in \cite{park:arxiv:21}, too large $\alpha$ may make the proposed GPI-HIA algorithm not converge. This is because, as $\alpha$ increases, the LogSumExp function becomes more similar to the minimum function and its shape turns to be non-smooth. In this case, no stationary point is characterized; thus, the algorithm cannot find the converging point. 

To properly tune $\alpha$, we use the following method. We start the GPI-HIA with large $\alpha$. If the iteration loop of the GPI-HIA does not converge within the predetermined number of iterations, we regard that the used $\alpha$ is too large to make the algorithm converge, so that we enforce to terminate the loop, decrease $\alpha$ with the predetermined amount, and newly start the algorithm again. We repeat this process until the algorithm converges before the predetermined number. As shown in the later section, the GPI-HIA algorithm with this $\alpha$ tuning performs very well. 


\end{remark}

\begin{remark}[Complexity and implementation] \normalfont
Similar to \cite{choi:arxiv:21, park:arxiv:21, choi:twc:20}, the computational complexity is dominated by calculating ${\bf{B}}_{\sf nc}^{-1}(\bar {\bf{f}})$. Since ${\bf{B}}_{\sf nc}(\bar {\bf{f}})$ is a block-diagonal matrix that consists of $K$ number of $N\times N$ sub-matrices, its inverse can be obtained by acquiring the inversion of each sub-matrix. Accordingly, the computational complexity per iteration is analyzed as $\CMcal{O}(K N^{3})$. 

We emphasize on that the proposed method has a benefit in implementation. For example, in \cite{zhang:tvt:19} where a transmit power minimization for the HIA model was tackled, a non-convex original problem was relaxed into a convex form and CVX was exploited to get a solution. Nevertheless, since CVX is not designed to run in real-time FPGA hardware \cite{jevganij:access:19}, it is infeasible to use this convex relaxation-based method in practice. 
On the contrary, our method does not need to use any off-the-shelf solver including CVX. The only required computational load is the matrix inversion, which is also used in a very simple precoding strategy such as ZF. For this reason, our method is preferable in terms of implementation compared to conventional convex relaxation based approaches. 
\end{remark}


\subsection{Fairness}
One possible issue in the HIA model is fairness. Since we maximize the sum secrecy rate, users in a certain layer may suffer from a very low rate. For example, increasing the power of the high priority messages incurs critical interference to the lower priority users, while increasing the power of the low priority messages is not harmful because the high priority users can eliminate the low priority messages via SIC. For this reason, in a perspective of the sum secrecy rate maximization, it is beneficial to increase the power of the low priority messages and decrease the secrecy rate of the high priority users, which can lead to an undesirable rate distribution in terms of fairness.

To resolve this fairness issue, we modify the proposed GPI-HIA (Non-Coll) by adopting the proportional fairness (PF) policy \cite{viswanath:tit:02}. In the PF algorithm, the BS traces the previously served rate on average and reflects this value into the optimization problem as inverse weights. 
Specifically, denoting that $\bar R_k^{\sf nc}(t)$ as the achieved secrecy rate for the message $s_k$ in the transmission block $t$, the average secrecy rate $\mu_{k}^{\sf nc}(t+1)$ is updated by a simple first-order autoregressive filter as
\begin{equation}\label{eq : mu non coll}
    \mu_{k}^{\sf nc}(t+1) = (1-\delta)\mu_{k}^{\sf nc}(t)+\delta\bar{R}_k^{\sf nc}(t),
\end{equation}
where $\delta \in (0,1)$ is a given parameter. With $\mu_{k}^{\sf nc}(t)$, we modify the problem \eqref{eq : objective maximize align non coll} into the weighted sum secrecy rate maximization problem as 
\begin{align}
    \max_{\bF{f}_1,\cdots,\bF{f}_K}&\sum_{k=1}^K\frac{1}{\mu_k^{\sf nc}(t)}\bar{R}_k^{\sf nc}\label{eq : optimization problem non-coll with pf}\\
    \text{subject to}&\sum_{k=1}^{K}\|\bF{f}_k \|^2 = 1.
\end{align}
By doing this, if the message $s_k$ has obtained very small secrecy rate during the previous transmission periods, $\mu_k^{\sf nc}(t)$ decreases. This leads to  the increase in the weight of $\bar R_k^{\sf nc}$. Then the BS tends to increase $\bar R_k^{\sf nc}$ considerably in the next transmission period. On the contrary to that, if the message $s_k$ has obtained large secrecy rate during the previous transmission periods, the PF algorithm puts a little efforts into increasing $\bar R_k^{\sf nc}$, so as to provide the secrecy rate fairly. 

To obtain a solution of the modified weighted sum secrecy rate maximization problem \eqref{eq : optimization problem non-coll with pf}, the corresponding first-order optimality condition is derived as follows. 
\begin{corollary} \label{coro : non-coll with pf}
    In the non-colluding case, the first-order KKT condition of the modified optimization problem \eqref{eq : optimization problem non-coll with pf} is satisfied if the following holds.
    \begin{equation}\label{eq : eigenvalue problem non coll with pf}
        \bar{\bF{B}}_{\sf nc}^{-1}(\bar{\bF{f}})\bar{\bF{A}}_{\sf nc}(\bar{\bF{f}})\bar{\bF{f}} = \bar{\lambda}_{\sf nc}(\bar{\bF{f}})\bar{\bF{f}},
    \end{equation}
    where
\begin{equation}\label{eq : A with pf non coll}
    \bar{\bF{A}}_{\sf nc}(\bar{\bF{f}}) \!=\! \bar{\lambda}^{\sf (num)}_{\sf nc}(\bar{\bF{f}})\!\times\!\sum_{k=1}^K\frac{1}{\mu_{k}^{\sf nc}(t)}\left(\frac{\sum\limits_{\ell=k}^{K}\sum\limits_{i \in \CMcal{L}_{\ell}}\left(\beta\left(\frac{\bar{\bF{f}}^{\sf H}\bF{A}_{k,i,\ell}\bar{\bF{f}}}{\bar{\bF{f}}^{\sf H}\bF{B}_{k,i,\ell}\bar{\bF{f}}}\right)^{-\beta}\frac{\bF{A}_{k,i,\ell}}{\bar{\bF{f}}^{\sf H}\bF{A}_{k,i,\ell}\bar{\bF{f}}} \right)}{\sum\limits_{\ell=k}^{K}\sum\limits_{i \in \CMcal{L}_{\ell}}\left(\frac{\bar{\bF{f}}^{\sf H}\bF{A}_{k,i,\ell}\bar{\bF{f}}}{\bar{\bF{f}}^{\sf H}\bF{B}_{k,i,\ell}\bar{\bF{f}}} \right)^{-\beta}} +\frac{\sum\limits_{\ell=1}^{k-1}\!\sum\limits_{i \in \CMcal{L}_{\ell}}\!\left(\!\beta\!\left(\frac{\bar{\bF{f}}^{\sf H}\!\bF{A}_{k,i,\ell}\bar{\bF{f}}}{\bar{\bF{f}}^{\sf H}\!\bF{B}_{k,i,\ell}\bar{\bF{f}}}\right)^{\beta}\frac{\bF{B}_{k,i,\ell}}{\bar{\bF{f}}^{\sf H}\!\bF{B}_{k,i,\ell}\bar{\bF{f}}} \right)}{\sum\limits_{\ell=1}^{k-1}\sum\limits_{i \in \CMcal{L}_{\ell}}\left(\frac{\bar{\bF{f}}^{\sf H}\bF{A}_{k,i,\ell}\bar{\bF{f}}}{\bar{\bF{f}}^{\sf H}\bF{B}_{k,i,\ell}\bar{\bF{f}}} \right)^\beta} \right),
\end{equation}
\begin{equation}\label{eq : B with pf non coll}
    \bar{{\bF{B}}}_{\sf nc}(\bar{\bF{f}}) \!=\!\bar{\lambda}_{\sf nc}^{\sf{(den)}}(\bar{\bF{f}})\!\times\!\sum_{k=1}^K\frac{1}{\mu_{k}^{\sf nc}(t)}\left(\frac{\sum\limits_{\ell=k}^{K}\sum\limits_{i \in \CMcal{L}_{\ell}}\left(\beta\left(\frac{\bar{\bF{f}}^{\sf H}\bF{A}_{k,i,\ell}\bar{\bF{f}}}{\bar{\bF{f}}^{\sf H}\bF{B}_{k,i,\ell}\bar{\bF{f}}}\right)^{-\beta}\frac{\bF{B}_{k,i,\ell}}{\bar{\bF{f}}^{\sf H}\bF{B}_{k,i,\ell}\bar{\bF{f}}} \right)}{\sum\limits_{\ell=k}^{K}\sum\limits_{i \in \CMcal{L}_{\ell}}\left(\frac{\bar{\bF{f}}^{\sf H}\bF{A}_{k,i,\ell}\bar{\bF{f}}}{\bar{\bF{f}}^{\sf H}\bF{B}_{k,i,\ell}\bar{\bF{f}}} \right)^{-\beta}}+\frac{\sum\limits_{\ell=1}^{k-1}\!\sum\limits_{i \in \CMcal{L}_{\ell}}\!\left(\!\beta\!\left(\frac{\bar{\bF{f}}^{\sf H}\!\bF{A}_{k,i,\ell}\bar{\bF{f}}}{\bar{\bF{f}}^{\sf H}\!\bF{B}_{k,i,\ell}\bar{\bF{f}}}\right)^{\beta}\!\frac{\bF{A}_{k,i,\ell}}{\bar{\bF{f}}^{\sf H}\!\bF{A}_{k,i,\ell}\bar{\bF{f}}} \right)}{\sum\limits_{\ell=1}^{k-1}\sum\limits_{i \in \CMcal{L}_{\ell}}\left(\frac{\bar{\bF{f}}^{\sf H}\bF{A}_{k,i,\ell}\bar{\bF{f}}}{\bar{\bF{f}}^{\sf H}\bF{B}_{k,i,\ell}\bar{\bF{f}}} \right)^\beta}\right),
\end{equation}
and
\begin{align}\label{eq : lambda with pf non coll}
     \bar{\lambda}_{\sf nc}(\bar{\bF{f}})& = \sum_{k=1}^K\frac{1}{\mu_{k}^{\sf nc}(t)}\left(-\frac{1}{\alpha}\log\left(\sum_{\ell = k}^K\sum_{i\in \CMcal{L}_{\ell}}\left(\frac{\bar{\bF{f}}^{\sf H}\bF{A}_{k,i,\ell}\bar{\bF{f}}}{\bar{\bF{f}}^{\sf H}\bF{B}_{k,i,\ell}\bar{\bF{f}}}\right)^{-\beta} \right)
     -\frac{1}{\alpha}\log\left(\sum_{\ell'=1}^{k-1}\sum_{i'\in\CMcal{L}_{\ell'}}\left(\frac{\bar{\bF{f}}^{\sf H}\bF{A}_{k,i',\ell'}\bar{\bF{f}}}{\bar{\bF{f}}^{\sf H}\bF{B}_{k,i',\ell'}\bar{\bF{f}}}\right)^{\beta}\right)\right)\\
     &= \frac{\bar{\lambda}_{\sf nc}^{\sf{(num)}}(\bar{\bF{f}})}{\bar{\lambda}_{\sf nc}^{\sf {(den)}}(\bar{\bF{f}})}.
\end{align}

\end{corollary}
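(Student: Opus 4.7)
The plan is to mirror the derivation of Lemma~\ref{lem:noncoll} (presented in Appendix~A.1), noting that the PF weights $1/\mu_k^{\sf nc}(t)$ are deterministic quantities known at the BS prior to the current transmission block and therefore do not depend on $\bar{\bF{f}}$. Consequently, every manipulation used in the unweighted case carries over essentially verbatim, with the weight $1/\mu_k^{\sf nc}(t)$ factored out of the $k$-th summand of the objective and its gradient. The first step is to substitute the LogSumExp approximations \eqref{eq : min rate Rayleigh}--\eqref{eq : max rate Rayleigh} into \eqref{eq : optimization problem non-coll with pf}, which recasts the weighted objective as $\sum_{k=1}^K \frac{1}{\mu_k^{\sf nc}(t)}\bar{R}_k^{\sf nc}$ in the Rayleigh-quotient form already used in \eqref{eq : objective rayleigh quotient non coll}. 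As in the unweighted case, the power constraint can be dropped, because the numerator and denominator of every ratio $\bar{\bF{f}}^{\sf H}\bF{A}_{k,i,\ell}\bar{\bF{f}}/\bar{\bF{f}}^{\sf H}\bF{B}_{k,i,\ell}\bar{\bF{f}}$ scale identically under $\bar{\bF{f}}\mapsto c\bar{\bF{f}}$.

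Next, I would compute the Wirtinger derivative of the weighted objective with respect to $\bar{\bF{f}}^{\sf H}$ and set it to zero. For each $k$, the derivative of $-\tfrac{1}{\alpha}\log\!\bigl(\sum (\cdot)^{-\beta}\bigr)$ and $-\tfrac{1}{\alpha}\log\!\bigl(\sum (\cdot)^{\beta}\bigr)$ expands via the chain rule into softmax-weighted combinations of $\bF{A}_{k,i,\ell}\bar{\bF{f}}/(\bar{\bF{f}}^{\sf H}\bF{A}_{k,i,\ell}\bar{\bF{f}})$ and $\bF{B}_{k,i,\ell}\bar{\bF{f}}/(\bar{\bF{f}}^{\sf H}\bF{B}_{k,i,\ell}\bar{\bF{f}})$, which are precisely the building blocks appearing in \eqref{eq : A with pf non coll}--\eqref{eq : B with pf non coll}. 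Grouping the terms whose coefficients carry $\bF{A}_{k,i,\ell}$ factors on one side, and the terms with $\bF{B}_{k,i,\ell}$ factors on the other, yields an identity of the form $\tilde{\bF{A}}(\bar{\bF{f}})\bar{\bF{f}} = \tilde{\bF{B}}(\bar{\bF{f}})\bar{\bF{f}}$, with every summand carrying the external weight $1/\mu_k^{\sf nc}(t)$.

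The final step is to introduce the scalars $\bar{\lambda}_{\sf nc}^{(\sf num)}(\bar{\bF{f}})$ and $\bar{\lambda}_{\sf nc}^{(\sf den)}(\bar{\bF{f}})$ so that the stationarity condition takes the eigenvalue form \eqref{eq : eigenvalue problem non coll with pf}. Exactly as in Lemma~\ref{lem:noncoll}, one multiplies $\tilde{\bF{A}}(\bar{\bF{f}})$ by $\bar{\lambda}_{\sf nc}^{(\sf num)}(\bar{\bF{f}})$ and $\tilde{\bF{B}}(\bar{\bF{f}})$ by $\bar{\lambda}_{\sf nc}^{(\sf den)}(\bar{\bF{f}})$; since these scalars are strictly positive, the equation is preserved, and identifying the resulting $\bar{\bF{A}}_{\sf nc}(\bar{\bF{f}})$ and $\bar{\bF{B}}_{\sf nc}(\bar{\bF{f}})$ with \eqref{eq : A with pf non coll} and \eqref{eq : B with pf non coll} is immediate. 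Left-multiplying by $\bar{\bF{B}}_{\sf nc}^{-1}(\bar{\bF{f}})$ gives \eqref{eq : eigenvalue problem non coll with pf}, and recognizing that $\bar{\lambda}_{\sf nc}^{(\sf num)}/\bar{\lambda}_{\sf nc}^{(\sf den)}$ equals the approximated weighted objective delivers \eqref{eq : lambda with pf non coll}.

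The main obstacle is purely notational bookkeeping: the chain rule applied to two nested logarithms of sums of power-law ratios produces a considerable number of terms, and one must package the softmax coefficients together with $\beta$ so they align exactly with the structure of \eqref{eq : A with pf non coll}--\eqref{eq : B with pf non coll}. Because the PF weights enter only as constant prefactors on each outer summand, however, no new analytical difficulty arises beyond what was already handled in Appendix~A.1, and the proof reduces to invoking the unweighted derivation inside a weighted sum over $k$.
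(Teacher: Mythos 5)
Your proposal is correct and follows essentially the same route as the paper: the paper's proof of this corollary simply states that one adds the weight $1/\mu_k^{\sf nc}(t)$ to the partial derivative of $\lambda_{\sf nc}(\bar{\bF{f}})$ computed in Appendix~A.1, which is exactly the observation you make (the PF weights are constants independent of $\bar{\bF{f}}$, so they pass through the Wirtinger derivative as prefactors on each outer summand). Your write-up merely spells out in more detail the chain-rule bookkeeping and the identification of $\bar{\bF{A}}_{\sf nc}$, $\bar{\bF{B}}_{\sf nc}$, $\bar{\lambda}_{\sf nc}^{(\sf num)}$, and $\bar{\lambda}_{\sf nc}^{(\sf den)}$ that the paper leaves implicit.
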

\begin{proof}
 The proof is straightforward by adding the weight $1/\mu_k^{\sf nc}(t)$ to the partial derivative of $\lambda_{\sf nc}(\bar{\bF{f}})$ in Appendix A.1.
\end{proof}
Using Corollary \ref{coro : non-coll with pf}, the proposed GPI-HIA (Non-Coll) with the PF policy follows similar steps to Algorithm \ref{alg:main nc}: 
Step 1. we compute the precoding vector using GPI-HIA, i.e., the pincipal eigenvector of $\bar{\bF{B}}_{\sf nc}^{-1}(\bar{\bF{f}})\bar{\bF{A}}_{\sf nc}(\bar{\bF{f}})$, 
Step 2. update $\mu_{k}^{\sf nc}(t)$ according to \eqref{eq : mu non coll}, and Step 3. repeat the steps 1 to 2. 
In the later section, we show that the proposed GPI-HIA adopting the PF policy achieves much improved fairness. 



\section{Precoding Optimization in the Colluding Case}
In this section, we consider the colluding case assuming that the lower priority users cooperate to decode the high priority message. To solve the optimization problem in \eqref{eq : objective maximize align coll} which is formulated for the colluding case, similar to the non-colluding case, we first convert the minimum function involved in \eqref{eq : objective maximize align coll} using the LogSumExp technique. Subsequently, we reformulate the approximated problem as a form of Rayleigh quotients with regard to the high dimensional vector \eqref{eq : new f}. With this form, we drive the first-order optimality condition and propose a novel algorithm to obtain the principal eigenvector.

\subsection{Reformulation to a Tractable Form}

A difference of the colluding case compared to the non-colluding case is on the achievable rate of the wiretapping channel. The achievable rate of the colluding wiretapping channel is 
\begin{align}
    R_{k,i',\ell'}^{\sf e} &= \log_2\left(1+\sum_{\ell' = 1}^{k-1}\sum_{i'\in \Cal{L}_{\ell'}}\frac{\left| \bF{h}_{i',\ell'}^{\sf H}\bF{f}_k\right|^2}{\sum_{j = k+1}^{K}\left|\bF{h}_{i',\ell'}^{\sf H}\bF{f}_j\right|^2 + \frac{\sigma^2}{P}} \right)\nonumber\\
    &= \log_2\left(\sum_{\ell' = 1}^{k-1}\sum_{i'\in\CMcal{L}_\ell'}\left(\frac{ \sum_{j=k}^K\bF{f}_j^{\sf H}\bF{h}_{i',\ell'}\bF{h}_{i',\ell'}^{\sf H}\bF{f}_j+\frac{\sigma^2}{P}}{\sum_{j = k+1}^{K}\bF{f}_j^{\sf H}\bF{h}_{i',\ell'}\bF{h}_{i',\ell'}^{\sf H}\bF{f}_j + \frac{\sigma^2}{P}} + \frac{1}{\gamma_k}\right) \right) 
    =\frac{1}{\log2}\log\left(\sum_{\ell' = 1}^{k-1}\sum_{i'\in\CMcal{L}_{\ell'}}\left(\frac{\bar{\bF{f}}^{\sf H}\bF{C}_{k,i',\ell'}\bar{\bF{f}}}{\bar{\bF{f}}^{\sf H}\bF{D}_{k,i',\ell'}\bar{\bF{f}}} \right) \right),\label{eq : rayleigh quotient wiretapping channel}
\end{align}
where 
\begin{equation}\label{eq : effective channel matrix C}
    \bF{C}_{k,i',\ell'} = \text{blkdiag}\left(\bF{0},\cdots,\bF{0},\underbrace{\gamma_k\bF{h}_{i',\ell'}\bF{h}_{i',\ell'}^{\sf H}}_{k\text{th block}},\bF{h}_{i',\ell'}\bF{h}_{i',\ell'}^{\sf H},\cdots,\bF{h}_{i',\ell'}\bF{h}_{i',\ell'}^{\sf H} \right)+\frac{\sigma^2}{P}\bF{I}_{NK},
\end{equation}
\begin{equation}\label{eq : effective channel matrix D}
    \bF{D}_{k,i',\ell'} = \gamma_k\times\text{blkdiag}\left(\bF{0},\cdots,\bF{0},\underbrace{\bF{h}_{i',\ell'}\bF{h}_{i',\ell'}^{\sf H}}_{(k+1)\text{th block}},\cdots,\bF{h}_{i',\ell'}\bF{h}_{i',\ell'}^{\sf H} \right)+\frac{\gamma_k\sigma^2}{P}\bF{I}_{NK}.
\end{equation}
and $\gamma_k = \sum_{\ell'=1}^{k-1}|\CMcal{L}_{\ell'}|$. Combining \eqref{eq : min rate Rayleigh} and \eqref{eq : rayleigh quotient wiretapping channel}, the secrecy rate of the message $s_k$ is approximately 
\begin{equation}\label{eq : secrecy rate rayleigh quotient with ec}
    \bar{R}_k^{\sf c} \approx -\frac{1}{\alpha}\log\left(\sum_{\ell=k}^K\sum_{i\in\CMcal{L}_\ell}\left(\frac{\bar{\bF{f}}^{\sf H}\bF{A}_{k,i,\ell}\bar{\bF{f}}}{\bar{\bF{f}}^{\sf H}\bF{B}_{k,i,\ell}\bar{\bF{f}}} \right)^{-\beta}\right)-\frac{1}{\log 2}\log\left(\sum_{\ell' = 1}^{k-1}\sum_{i'\in\CMcal{L}_{\ell'}}\left(\frac{\bar{\bF{f}}^{\sf H}\bF{C}_{k,i',\ell'}\bar{\bF{f}}}{\bar{\bF{f}}^{\sf H}\bF{D}_{k,i',\ell'}\bar{\bF{f}}} \right) \right).
\end{equation}
Finally, with these high dimensional representations, the problem \eqref{eq : objective maximize align coll} is reformulated as
\begin{align}
    \max_{\bar{\bF{f}}}&\sum_{k=1}^K\left( -\frac{1}{\alpha}\log\left(\sum_{\ell=k}^K\sum_{i\in\CMcal{L}_\ell}\left(\frac{\bar{\bF{f}}^{\sf H}\bF{A}_{k,i,\ell}\bar{\bF{f}}}{\bar{\bF{f}}^{\sf H}\bF{B}_{k,i,\ell}\bar{\bF{f}}} \right)^{-\beta}\right)-\frac{1}{\log 2}\log\left(\sum_{\ell' = 1}^{k-1}\sum_{i'\in\CMcal{L}_{\ell'}}\left(\frac{\bar{\bF{f}}^{\sf H}\bF{C}_{k,i',\ell'}\bar{\bF{f}}}{\bar{\bF{f}}^{\sf H}\bF{D}_{k,i',\ell'}\bar{\bF{f}}} \right) \right)\right).\label{eq : objective rayleigh quotient coll}
\end{align}
The transmit power constraint vanishes in \eqref{eq : objective rayleigh quotient coll} because when reformulating the objective equation into form of Rayleigh quotients as \eqref{eq : objective rayleigh quotient coll}, numerator and denominator are normalized by $\left\|\bar{\bF{f}} \right\|$ without affecting the objective equation. Now we are ready to tackle the problem \eqref{eq : objective rayleigh quotient coll}.

\subsection{First-Order KKT Condition}
In order to obtain a solution of \eqref{eq : objective rayleigh quotient coll}, we drive a first-order KKT condition. The following lemma shows the main result in this subsection.
\begin{lemma}
In the colluding case, the first-order KKT condition of the optimization problem \eqref{eq : objective rayleigh quotient coll} is satisfied if the following holds.
\begin{equation}\label{eq : eigenvalue problem coll}
    \bF{B}_{\sf c}^{-1}(\bar{\bF{f}})\bF{A}_{\sf c}(\bar{\bF{f}}) = \lambda_{\sf c}(\bar{\bF{f}})\bar{\bF{f}}
\end{equation}
where
\begin{equation}\label{eq : A coll}
    \bF{A}_{\sf c}(\bar{\bF{f}})\! =\! \lambda_{\sf c}^{\sf (num)}\times\!\sum_{k = 1}^K\!\left(\frac{1}{\alpha}\frac{\sum\limits_{\ell = k}^K\sum\limits_{i\in\CMcal{L}_\ell}\!\beta\!\left(\frac{\bar{\bF{f}}^{\sf H}\bF{A}_{k,i,\ell}\bar{\bF{f}}}{\bar{\bF{f}}^{\sf H}\bF{B}_{k,i,\ell}\bar{\bF{f}}} \right)^{-\beta}\!\frac{\bF{A}_{k,i,\ell}}{\bar{\bF{f}}^{\sf H}\bF{A}_{k,i,\ell}\bar{\bF{f}}}}{\sum\limits_{\ell = k}^K\sum\limits_{i\in\Cal{L}_\ell}\left(\frac{\bar{\bF{f}}^{\sf H}\bF{A}_{k,i,\ell}\bar{\bF{f}}}{\bar{\bF{f}}^{\sf H}\bF{B}_{k,i,\ell}\bar{\bF{f}}}\right)^{-\beta}} + \frac{1}{\log2}\frac{\sum\limits_{\ell' = 1}^{k-1}\sum\limits_{i'\in\Cal{L}_{\ell'}}\frac{\bar{\bF{f}}^{\sf H}\bF{C}_{k,i',\ell'}\bar{\bF{f}}}{\bar{\bF{f}}^{\sf H}\bF{D}_{k,i',\ell'}\bar{\bF{f}}}\left(\frac{\bF{D}_{k,i',\ell'}}{\bar{\bF{f}}^{\sf H}\bF{D}_{k,i',\ell'}\bar{\bF{f}}} \right)}{\sum\limits_{\ell' = 1}^{k-1}\sum\limits_{i'\in\Cal{L}_{\ell'}}\frac{\bar{\bF{f}}^{\sf H}\bF{C}_{k,i',\ell'}\bar{\bF{f}}}{\bar{\bF{f}}^{\sf H}\bF{D}_{k,i',\ell'}\bar{\bF{f}}}}\right),
\end{equation}
\begin{equation}\label{eq : B coll}
    \bF{B}_{\sf c}(\bar{\bF{f}})\! =\! \lambda_{\sf c}^{\sf (den)}\times\!\sum_{k = 1}^K\!\left(\frac{1}{\alpha}\frac{\sum\limits_{\ell = k}^K\sum\limits_{i\in\Cal{L}_\ell}\!\beta\!\left(\frac{\bar{\bF{f}}^{\sf H}\bF{A}_{k,i,\ell}\bar{\bF{f}}}{\bar{\bF{f}}^{\sf H}\bF{B}_{k,i,\ell}\bar{\bF{f}}} \right)^{-\beta}\!\frac{\bF{B}_{k,i,\ell}}{\bar{\bF{f}}^{\sf H}\bF{B}_{k,i,\ell}\bar{\bF{f}}}}{\sum\limits_{\ell = k}^K\sum\limits_{i\in\Cal{L}_\ell}\left(\frac{\bar{\bF{f}}^{\sf H}\bF{A}_{k,i,\ell}\bar{\bF{f}}}{\bar{\bF{f}}^{\sf H}\bF{B}_{k,i,\ell}\bar{\bF{f}}}\right)^{-\beta}} + \frac{1}{\log2}\frac{\sum\limits_{\ell' = 1}^{k-1}\sum\limits_{i'\in\Cal{L}_{\ell'}}\frac{\bar{\bF{f}}^{\sf H}\bF{C}_{k,i',\ell'}\bar{\bF{f}}}{\bar{\bF{f}}^{\sf H}\bF{D}_{k,i',\ell'}\bar{\bF{f}}}\left(\frac{\bF{C}_{k,i',\ell'}}{\bar{\bF{f}}^{\sf H}\bF{C}_{k,i',\ell'}\bar{\bF{f}}} \right)}{\sum\limits_{\ell' = 1}^{k-1}\sum\limits_{i'\in\Cal{L}_{\ell'}}\frac{\bar{\bF{f}}^{\sf H}\bF{C}_{k,i',\ell'}\bar{\bF{f}}}{\bar{\bF{f}}^{\sf H}\bF{D}_{k,i',\ell'}\bar{\bF{f}}}}\right),    
\end{equation}
with 
\begin{align}\label{eq : lambda coll}
    \lambda_{\sf c}(\bar{\bF{f}})& = \sum_{k=1}^K\left( -\frac{1}{\alpha}\log\left(\sum_{\ell=k}^K\sum_{i\in\CMcal{L}_\ell}\left(\frac{\bar{\bF{f}}^{\sf H}\bF{A}_{k,i,\ell}\bar{\bF{f}}}{\bar{\bF{f}}^{\sf H}\bF{B}_{k,i,\ell}\bar{\bF{f}}} \right)^{-\beta}\right)-\frac{1}{\log 2}\log\left(\sum_{\ell' = 1}^{k-1}\sum_{i'\in\CMcal{L}_{\ell'}}\left(\frac{\bar{\bF{f}}^{\sf H}\bF{C}_{k,i',\ell'}\bar{\bF{f}}}{\bar{\bF{f}}^{\sf H}\bF{D}_{k,i',\ell'}\bar{\bF{f}}} \right) \right) \right)\\
    &= \frac{\lambda_{\sf c}^{\sf (num)}(\bar{\bF{f}})}{\lambda^{\sf (den)}_{\sf c}(\bar{\bF{f}})}.
\end{align}
\end{lemma}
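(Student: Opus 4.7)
The plan is to mirror the derivation of Lemma \ref{lem:noncoll}, since the first term of the objective in \eqref{eq : objective rayleigh quotient coll} has exactly the same functional form as in the non-colluding reformulation \eqref{eq : objective rayleigh quotient non coll}; the only novelty is the colluding-wiretap (second) term. First, I would compute the Wirtinger derivative $\partial \lambda_{\sf c}(\bar{\bF{f}})/\partial \bar{\bF{f}}^{\sf H}$ of the objective term by term. For the minimum-rate contribution, the calculation is identical to that in Appendix A.1: applying the chain rule to the outer logarithm and to each $(\bar{\bF{f}}^{\sf H}\bF{A}_{k,i,\ell}\bar{\bF{f}}/\bar{\bF{f}}^{\sf H}\bF{B}_{k,i,\ell}\bar{\bF{f}})^{-\beta}$ via the identity $\partial(\bar{\bF{f}}^{\sf H}\bF{M}\bar{\bF{f}})/\partial \bar{\bF{f}}^{\sf H} = \bF{M}\bar{\bF{f}}$, one obtains the usual decomposition into an $\bF{A}_{k,i,\ell}$-type part and a $\bF{B}_{k,i,\ell}$-type part with opposite signs, exactly as in the non-colluding proof.

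For the colluding-wiretap term $-(1/\log 2)\log \sum_{i',\ell'} \bar{\bF{f}}^{\sf H}\bF{C}_{k,i',\ell'}\bar{\bF{f}}/\bar{\bF{f}}^{\sf H}\bF{D}_{k,i',\ell'}\bar{\bF{f}}$, I would apply the chain rule using
\begin{equation*}
\frac{\partial}{\partial \bar{\bF{f}}^{\sf H}}\frac{\bar{\bF{f}}^{\sf H}\bF{C}_{k,i',\ell'}\bar{\bF{f}}}{\bar{\bF{f}}^{\sf H}\bF{D}_{k,i',\ell'}\bar{\bF{f}}} = \frac{\bar{\bF{f}}^{\sf H}\bF{C}_{k,i',\ell'}\bar{\bF{f}}}{\bar{\bF{f}}^{\sf H}\bF{D}_{k,i',\ell'}\bar{\bF{f}}}\left(\frac{\bF{C}_{k,i',\ell'}\bar{\bF{f}}}{\bar{\bF{f}}^{\sf H}\bF{C}_{k,i',\ell'}\bar{\bF{f}}} - \frac{\bF{D}_{k,i',\ell'}\bar{\bF{f}}}{\bar{\bF{f}}^{\sf H}\bF{D}_{k,i',\ell'}\bar{\bF{f}}}\right).
\end{equation*}
Dividing by the aggregate sum from the outer logarithm derivative yields two contributions of opposite sign, one scaled by $\bF{C}_{k,i',\ell'}/(\bar{\bF{f}}^{\sf H}\bF{C}_{k,i',\ell'}\bar{\bF{f}})$ and the other by $\bF{D}_{k,i',\ell'}/(\bar{\bF{f}}^{\sf H}\bF{D}_{k,i',\ell'}\bar{\bF{f}})$, each prefactored by $\bar{\bF{f}}^{\sf H}\bF{C}_{k,i',\ell'}\bar{\bF{f}}/\bar{\bF{f}}^{\sf H}\bF{D}_{k,i',\ell'}\bar{\bF{f}}$, which exactly matches the fraction appearing in \eqref{eq : A coll}--\eqref{eq : B coll}.

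Next, I would enforce the first-order stationarity $\partial \lambda_{\sf c}/\partial \bar{\bF{f}}^{\sf H} = \bF{0}$ and move all positive-sign (``numerator-type'') contributions---the $\bF{A}_{k,i,\ell}$ terms from the minimum-rate part together with the $\bF{D}_{k,i',\ell'}$ terms from the colluding-wiretap part---to one side, and the negative-sign (``denominator-type'') contributions---$\bF{B}_{k,i,\ell}$ and $\bF{C}_{k,i',\ell'}$---to the other. This produces a raw identity of the form $\tilde{\bF{A}}_{\sf c}(\bar{\bF{f}})\bar{\bF{f}} = \tilde{\bF{B}}_{\sf c}(\bar{\bF{f}})\bar{\bF{f}}$, a functional eigenvalue equation with trivial eigenvalue one. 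To expose the objective itself as the eigenvalue, I would rescale by setting $\bF{A}_{\sf c}(\bar{\bF{f}}) := \lambda_{\sf c}^{(\sf num)}(\bar{\bF{f}})\,\tilde{\bF{A}}_{\sf c}(\bar{\bF{f}})$ and $\bF{B}_{\sf c}(\bar{\bF{f}}) := \lambda_{\sf c}^{(\sf den)}(\bar{\bF{f}})\,\tilde{\bF{B}}_{\sf c}(\bar{\bF{f}})$; premultiplying by $\bF{B}_{\sf c}^{-1}(\bar{\bF{f}})$ then yields precisely \eqref{eq : eigenvalue problem coll} with eigenvalue $\lambda_{\sf c}(\bar{\bF{f}}) = \lambda_{\sf c}^{(\sf num)}(\bar{\bF{f}})/\lambda_{\sf c}^{(\sf den)}(\bar{\bF{f}})$.

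The main obstacle is pure bookkeeping: tracking the signs produced by the Wirtinger chain rule so that the $\bF{A}$/$\bF{D}$ versus $\bF{B}$/$\bF{C}$ grouping is applied consistently across the $k$-sum, and verifying that the rescaling by $\lambda_{\sf c}^{(\sf num)}$ and $\lambda_{\sf c}^{(\sf den)}$ reproduces $\bF{A}_{\sf c}$ and $\bF{B}_{\sf c}$ in the precise form \eqref{eq : A coll}--\eqref{eq : B coll}. Aside from replacing the $(a/b)^{\beta}$-type wiretap contribution used in Appendix A.1 with the simpler Rayleigh-quotient $c/d$ structure arising here, the argument reuses the non-colluding proof essentially verbatim.
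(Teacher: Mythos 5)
Your proposal is correct and follows essentially the same route as the paper's Appendix A.2: compute the Wirtinger derivative $\partial\lambda_{\sf c}/\partial\bar{\bF{f}}^{\sf H}$ term by term (reusing the Appendix A.1 computation for the minimum-rate part and the Rayleigh-quotient derivative for the colluding-wiretap part), set it to zero, group the $\bF{A}$/$\bF{D}$ contributions against the $\bF{B}$/$\bF{C}$ contributions, and rescale by $\lambda_{\sf c}^{(\sf num)}$ and $\lambda_{\sf c}^{(\sf den)}$ to cast the stationarity condition as the NEPv \eqref{eq : eigenvalue problem coll}. The sign bookkeeping you describe matches the paper's \eqref{eq : partial derivative lambda c} exactly.
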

\begin{proof}
See Appendix A.2.
\end{proof}

Similar to the non-colluding case, the derived optimality condition \eqref{eq : objective rayleigh quotient coll} is cast as a class of NEPv \cite{cai:sjmaa:18} where the corresponding eigenvalue is equivalent to the objective function of \eqref{eq : objective maximize align coll}. Consequently, finding the principal eigenvector of ${\bF{B}}_{\sf c}^{-1}(\bar{\bF{f}}){\bF{A}_{\sf c}}(\bar{\bF{f}})$ is equivalent to finding the local optimal point. In the next subsection, we propose a novel method GPI-HIA for the colluding case (GPI-HIA (Coll)) which  finds the principal eigenvector efficiently.

\subsection{GPI-HIA for the Colluding Case}

The proposed GPI-HIA (Coll) iteratively updates the precoding vector as
\begin{equation}
\bar{\bF{f}}_{(t)} \leftarrow \frac{\bF{B}_{\sf c}^{-1}(\bar{\bF{f}}_{(t-1)})\bF{A}_{\sf c}(\bar{\bF{f}}_{(t-1)})\bar{\bF{f}}_{(t-1)}}{\left\|\bF{B}_{\sf c}^{-1}(\bar{\bF{f}}_{(t-1)})\bF{A}_{\sf c}(\bar{\bF{f}}_{(t-1)})\bar{\bF{f}}_{(t-1)} \right\|}.
\end{equation}
We repeat this process until the convergence criterion is met. For the convergence condition, we use $\left\|\bar{\bF{f}}_{(t)}-\bar{\bF{f}}_{(t-1)} \right\|<\epsilon$ where $\epsilon$ is a tolerance level. Algorithm \ref{alg:main c} describes the process.

\begin{algorithm} [t]
\caption{GPI-HIA (Coll)} \label{alg:main c} 

 {\bf{initialize}}: $\bar {\bf{f}}_{(0)}$ (MRT)\\
 Set the iteration count $t = 1$\\
\While {$\left\|\bar {\bf{f}}_{(t)} - \bar {\bf{f}}_{(t-1)} \right\| > \epsilon$}{
 Construct the matrix ${\bF{A}}_{\sf c}(\bar {\bF{f}}_{(t-1)})$  by using \eqref{eq : A coll}.\\
 Construct the matrix ${\bf{B}}_{\sf c} (\bar {\bF{f}}_{(t-1)})$ by using \eqref{eq : B coll}. \\
 Compute $\bar {\bF{f}}_{(t)} = {\bF{B}}
 _{\sf c}^{-1}(\bar{\bF{f}}_{(t-1)}){\bF{A}}_{\sf c}(\bar{\bF{f}}_{(t-1)})\bar{\bF{f}}_{(t-1)}$.\\
  {Normalize $\bar {\bf{f}}_{(t)} = \bar {\bf{f}}_{(t)}/\left\| \bar {\bf{f}}_{(t)}\right\|$}.\\
 $t \leftarrow t+1$.
 }
 \Return{\ }{${\bf \bar f} = {\bf \bar f}_{(t)}$}.\\
\end{algorithm}

\begin{remark} \normalfont (Fairness)
The fairness issue can be caused in the colluding case by an imbalance between the lower layer users' rates and the higher layer users' rates. We can address this issue by adopting the PF policy as in the non-colluding HIA case. 
\end{remark}

\section{Special Case: Precoding Optimization in Downlink NOMA Systems}
As a special case, our problems can reduce to a sum rate maximization problem in downlink MISO NOMA systems. In downlink MISO NOMA, each user has interference decoding capability in order to mitigate the amount of inter-user interference. 
Specifically, assuming that the BS serves $K$ users and the decoding order at each user is predetermined as $1 \rightarrow K$,  user $k$ first decodes $s_1, \cdots, s_{k-1}$ and removes the decoded messages, and decodes $s_k$ with the reduced amount of the interference. Without loss of generality, we let $|\bF{h}_1|^2< |\bF{h}_2|^2 <\cdots<|\bF{h}_K|^2$. 
To properly perform this SIC, it needs to be guaranteed that user $k$ successfully decodes $s_1, \cdots, s_{k-1}$, leading to that the information rate of $s_k$ is determined as $\bar R_k = \min\limits_{i \ge k} \{R_{i,k}\}$, where $R_{i,k}$ is the achievable rate of user $i$ for the message $s_k$ defined as \cite{zhu:tvt:20}
\begin{align}\label{eq : rate of noma}
    R_{i,k} = \log_2\left(1+\frac{\left| \bF{h}_{i}^{\sf H}\bF{f}_k\right|^2}{\sum_{j = k+1}^K\left|\bF{h}_{i}^{\sf H}\bF{f}_j\right|^2 + \frac{\sigma^2}{P}}\right).
\end{align}
In \eqref{eq : rate of noma}, $\bF{h}_i$ is the channel vector between the BS and user $i$.
Maximizing the sum rate in this system, an optimization problem with respect to the precoders is formulated as 
\begin{align}
	\max_{\bF{f}_1,\cdots,\bF{f}_K}&\sum\limits_{k=1}^K \bar R_k \label{eq : objective maximize align noma}
	\\ \label{eq:const noma}
 	\text{subject to }&\sum\limits_{k=1}^{K}||\bF{f}_{k}||^2  = 1.
\end{align}
Comparing \eqref{eq : objective maximize align non coll}, \eqref{eq : objective maximize align coll}, and \eqref{eq : objective maximize align noma}, we observe that the problem \eqref{eq : objective maximize align noma} is equivalent to the simplified form of our original setup, that can be reduced by ignoring the wiretapping lower priority users and assuming that only one user exists in each layer, i.e., $|\CMcal{L}_i| = 1$ for $i \in \{1,\cdots,K\}$. For this reason, the proposed GPI-HIA method is also applicable for  optimizing downlink MISO NOMA systems. 
In this section, extending our method further, we investigate how to apply the proposed method to solve \eqref{eq : objective maximize align noma} under the imperfect CSIT assumption.


\subsection{CSIT Acquisition and Relaxation}
On the contrary to the previous sections, we assume that the BS cannot have perfect knowledge of CSIT. 
Specifically, a limited feedback strategy is used to acquire the corresponding CSIT \cite{park:twc:16, park:wcl:16}, which renders the estimated CSIT $\hat {\bf{h}}_i$ as follows:
\begin{equation}\label{eq : estimated channel vector}
    \hat{\bF{h}}_{i} = \bF{U}_{i}\bF{\Lambda}_{i}^{\frac{1}{2}}\left( \sqrt{1-\kappa^2}\bF{g}_i + \kappa\bF{v}_i\right) = \bF{h}_i - \bF{e}_i,
\end{equation}
where ${\bf{e}}_i$ is the CSIT estimation error, $\bF{U}_i$ and $\bF{\Lambda}_i$ is the set of eigenvectors and eigenvalues of channel covariance matrix, $\bF{g}_i$ and $\bF{v}_i$ are drawn from IID $\CMcal{CN}(0,1)$. Here, the amount of feedback bits is implicitly controlled via the parameter $\kappa$; so that $\kappa$ decreases when the feedback bits increase, thereby the CSIT accuracy increases. In \eqref{eq : estimated channel vector}, the covariance of the error $\bF{e}_i$ is
\begin{equation}\label{eq : channel error cov}
    \bb{E}[\bF{e}_i\bF{e}_i^{\sf H}] = \bF{U}_{i}\bF{\Lambda}_{i}^{\frac{1}{2}}\left(2- 2\sqrt{1-\kappa^2}\right)\bF{\Lambda}_{i}^{\frac{1}{2}}\bF{U}_{i}^{\sf H} = \bF{\Phi}_i.
\end{equation}


With estimated channel vector, a lower bound on the achievable rate is obtained as
\begin{align}
    R_{i,k} &=\bb{E}\left[ \log_2\left(1+\frac{\left| {\bF{h}}_{i}^{\sf H}\bF{f}_k\right|^2}{\sum_{j = k+1}^K\left|{\bF{h}}_{i}^{\sf H}\bF{f}_j\right|^2  + \frac{\sigma^2}{P}}\right) \right] \nonumber \\
    &\overset{(a)}{\geq}\log_2\left(1+\frac{\bF{f}_k^{\sf H}\hat{\bF{h}}_i\hat{\bF{h}}_{i}^{\sf H}\bF{f}_k}{\sum_{j = k+1}^K \bF{f}_j^{\sf H}\hat{\bF{h}}_i\hat{\bF{h}}_{i}^{\sf H}\bF{f}_j +\sum_{j = k}^K\bF{f}_j^{\sf H}\bb{E}\left[\bF{e}_i\bF{e}_i^{\sf H}\right]\bF{f}_j + \frac{\sigma^2}{P}}\right) \nonumber \\
    &= \log_2\left(1+\frac{\bF{f}_k^{\sf H}\hat{\bF{h}}_i\hat{\bF{h}}_{i}^{\sf H}\bF{f}_k}{\sum_{j = k+1}^K \bF{f}_j^{\sf H}\hat{\bF{h}}_i\hat{\bF{h}}_{i}^{\sf H}\bF{f}_j +\sum_{j = k}^K\bF{f}_j^{\sf H}\Phi_{i}\bF{f}_j + \frac{\sigma^2}{P}}\right) = {R}^{\sf lb}_{i,k},\label{eq : lower bound of achievable rate}
\end{align}
where the expectation is for the randomness associated with the CSIT error, and $(a)$ comes from treating the CSIT estimation error as additive noise \cite{choi:twc:20, park:arxiv:21} and applying Jensen's inequality. With \eqref{eq : lower bound of achievable rate}, a lower bound on the information rate of $s_k$ is set as $\bar{R}_k^{\sf lb} = \min\limits_{i\geq k}\{ R_{i,k}^{\sf lb} \}$.
Accordingly, a relaxed sum rate maximization problem in NOMA system is reformulated as
\begin{align}
	\max_{\bF{f}_1,\cdots,\bF{f}_K}&\sum\limits_{k=1}^K \bar R_k^{\sf lb} \label{eq : new objective maximize align noma}
	\\ \label{eq:new const noma}
 	\text{subject to }&\sum\limits_{k=1}^{K}||\bF{f}_{k}||^2  = 1. 
\end{align}
Next, we put forth how to apply the proposed GPI-HIA method to solve \eqref{eq : new objective maximize align noma}. 

\subsection{Generalized Power Iteration for Downlink NOMA}
We follow the same step with the proposed GPI-HIA. We first rewrite \eqref{eq : lower bound of achievable rate} by representing the precoding vectors onto the higher dimensional space as in \eqref{eq : new f}:
\begin{align}
    R_{i,k}^{\sf lb} 
    =\log_2\left(\frac{\bar{\bF{f}}^{\sf H}\hat{\bF{A}}_{i,k}  \bar{\bF{f}}}{\bar{\bF{f}}^{\sf H}\hat{\bF{B}}_{i,k}  \bar{\bF{f}}} \right),\label{eq : rayleigh noma}
\end{align}
where 
\begin{align}
    \hat{\bF{A}}_{i,k} &= \text{blkdiag}\left(\bF{0}, \cdots,\bF{0},\underbrace{\hat{\bF{h}}_{i}\hat{\bF{h}}_{i}^{\sf H}+\bF{\Phi}_i}_{k\text{th block}},\cdots, \hat{\bF{h}}_{i}\hat{\bF{h}}_{i}^{\sf H}+\bF{\Phi}_i\right)
    +\frac{\sigma^2}{P}\bF{I}_{NK} \in \bb{C}^{NK\times NK}, \label{eq : effective channel matrix A noma} 
\end{align}
\begin{align}
    \hat{\bF{B}}_{i,k} &= \hat{\bF{A}}_{i,k} - \text{blkdiag}\left(\bF{0},\cdots,\bF{0},\underbrace{\hat{\bF{h}}_{i}\hat{\bF{h}}_{i}^{\sf H}}_{k\text{th block}},\bF{0},\cdots,\bF{0} \right)  \in\bb{C}^{NK\times NK}. \label{eq : effective channel matrix B noma}
\end{align}
Subsequently, we approximate the minimum function included in the optimization problem by leveraging \eqref{eq : rayleigh noma} and the LogSumExp technique:
\begin{align}
    \min_{i\geq k}\{R_{i,k}^{\sf lb} \} &\approx -\frac{1}{\alpha}\log\left(\sum_{i=k}^K e ^{-\alpha R_{i,k}^{\sf lb}}\right)
    =-\frac{1}{\alpha}\log\left(\sum_{i = k}^K\left(\frac{\bar{\bF{f}}^{\sf H}\hat{\bF{A}}_{i,k}\bar{\bF{f}}}{\bar{\bF{f}}^{\sf H}\hat{\bF{B}}_{i,k}\bar{\bF{f}}} \right)^{-\beta}\right),
\end{align}
where $\beta = \alpha\frac{1}{\log2}$. Finally, the problem \eqref{eq : new objective maximize align noma} is reformulated as
\begin{equation}\label{eq : rayleigh objective noma}
    \max_{\bar{\bF{f}}} \sum_{k = 1}^K-\frac{1}{\alpha}\log\left(\sum_{i = k}^K\left(\frac{\bar{\bF{f}}^{\sf H}\hat{\bF{A}}_{i,k}\bar{\bF{f}}}{\bar{\bF{f}}^{\sf H}\hat{\bF{B}}_{i,k}\bar{\bF{f}}} \right)^{-\beta}\right).
\end{equation}

To obtain a solution of \eqref{eq : rayleigh objective noma}, we drive a first-order KKT condition in the following corollary.
\begin{corollary}
The first-order KKT condition of the optimization problem \eqref{eq : rayleigh objective noma} is satisfied if the following holds.
\begin{equation}
    \bF{B}_{\sf NOMA}^{-1}(\bar{\bF{f}})\bF{A}_{\sf NOMA}(\bar{\bF{f}}) = \lambda_{\sf NOMA}(\bar{\bF{f}})\bar{\bF{f}} \label{eq : noma corollary}
\end{equation}
where
\begin{equation}\label{eq : A noma}
        {\bF{A}}_{\sf NOMA}(\bar{\bF{f}})=\lambda_{\sf NOMA}^{(\sf{num})}(\bar{\bF{f}})\times\sum_{k=1}^K\left({\sum\limits_{i=k}^{K}\left(\beta\left(\frac{\bar{\bF{f}}^{\sf H}\hat{\bF{A}}_{i,k}\bar{\bF{f}}}{\bar{\bF{f}}^{\sf H}\hat{\bF{B}}_{i,k}\bar{\bF{f}}}\right)^{-\beta}\frac{\hat{\bF{A}}_{i,k}}{\bar{\bF{f}}^{\sf H}\hat{\bF{A}}_{i,k}\bar{\bF{f}}} \right)}\Big/{\sum\limits_{i=k}^{K}\left(\frac{\bar{\bF{f}}^{\sf H}\hat{\bF{A}}_{i,k}\bar{\bF{f}}}{\bar{\bF{f}}^{\sf H}\hat{\bF{B}}_{i,k}\bar{\bF{f}}} \right)^{-\beta}} \right),
\end{equation}
\begin{equation}\label{eq : B noma}
        {\bF{B}}_{\sf NOMA}(\bar{\bF{f}}) =\lambda_{\sf NOMA}^{(\sf{den})}(\bar{\bF{f}})\times\sum_{k=1}^K\left({\sum\limits_{i=k}^{K}\left(\beta\left(\frac{\bar{\bF{f}}^{\sf H}\hat{\bF{A}}_{i,k}\bar{\bF{f}}}{\bar{\bF{f}}^{\sf H}\hat{\bF{B}}_{i,k}\bar{\bF{f}}}\right)^{-\beta}\frac{\hat{\bF{B}}_{i,k}}{\bar{\bF{f}}^{\sf H}\hat{\bF{B}}_{i,k}\bar{\bF{f}}} \right)}\Big/{\sum\limits_{i=k}^{K}\left(\frac{\bar{\bF{f}}^{\sf H}\hat{\bF{A}}_{i,k}\bar{\bF{f}}}{\bar{\bF{f}}^{\sf H}
        \hat{\bF{B}}_{i,k}\bar{\bF{f}}} \right)^{-\beta}} \right),
\end{equation}
with 
\begin{align}
    \lambda_{\sf NOMA} &= \sum_{k = 1}^K-\frac{1}{\alpha}\log\left(\sum_{i = k}^K\left(\frac{\bar{\bF{f}}^{\sf H}\hat{\bF{A}}_{i,k}\bar{\bF{f}}}{\bar{\bF{f}}^{\sf H}\hat{\bF{B}}_{i,k}\bar{\bF{f}}} \right)^{-\beta}\right) =\frac{\lambda_{\sf NOMA}^{(\sf num)}(\bar{\bF{f}})}{\lambda_{\sf NOMA}^{\sf (den)}(\bar{\bF{f}})}.
\end{align}
\begin{proof}
The proof is straightforward by removing the secrecy condition to the partial derivative of $\lambda_{\sf nc}(\bar{\bF{f}})$ in Appendix A.1.
\end{proof}
\end{corollary}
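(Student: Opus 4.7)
My plan is to replicate the derivation of Lemma 1 (Appendix A.1) in the NOMA setting, with two simplifications: since each layer contains a single user we have $\CMcal{L}_k=\{k\}$, so the inner sums over $\CMcal{L}_\ell$ collapse to single terms; and since there are no eavesdroppers in the NOMA problem, the entire ``$\ell'<k$'' max-approximation block disappears. Concretely, the objective in \eqref{eq : rayleigh objective noma} is a smooth function of $\bar{\bF{f}}$, so the first-order optimality condition is obtained by computing the Wirtinger gradient $\partial \lambda_{\sf NOMA}/\partial \bar{\bF{f}}^{\sf H}$ and setting it to zero.

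First I would differentiate term by term. For each $k$, the inner sum over $i\ge k$ of $(\bar{\bF{f}}^{\sf H}\hat{\bF{A}}_{i,k}\bar{\bF{f}}/\bar{\bF{f}}^{\sf H}\hat{\bF{B}}_{i,k}\bar{\bF{f}})^{-\beta}$ has gradient obtained via the chain rule together with the standard identity $\partial(\bar{\bF{f}}^{\sf H}\bF{X}\bar{\bF{f}})/\partial \bar{\bF{f}}^{\sf H} = \bF{X}\bar{\bF{f}}$. After pushing the outer $-\tfrac{1}{\alpha}\log(\cdot)$ derivative through, the $k$th summand produces exactly the ratio inside the big parentheses of \eqref{eq : A noma} and \eqref{eq : B noma}: an $\hat{\bF{A}}_{i,k}$-type term from differentiating the numerator of the inner Rayleigh quotient, and an $\hat{\bF{B}}_{i,k}$-type term with opposite sign from differentiating the denominator. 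Summing over $k$ and separating the two signs gives a stationarity equation of the form $\widetilde{\bF{A}}(\bar{\bF{f}})\bar{\bF{f}} = \widetilde{\bF{B}}(\bar{\bF{f}})\bar{\bF{f}}$, where $\widetilde{\bF{A}}$ accumulates the $\hat{\bF{A}}_{i,k}$ contributions and $\widetilde{\bF{B}}$ the $\hat{\bF{B}}_{i,k}$ contributions.

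To recast this stationarity condition as the NEPv in \eqref{eq : noma corollary}, I would rescale both sides by a common positive scalar so that $\widetilde{\bF{A}}$ becomes $\bF{A}_{\sf NOMA}/\lambda_{\sf NOMA}^{(\sf den)}$ and $\widetilde{\bF{B}}$ becomes $\bF{B}_{\sf NOMA}/\lambda_{\sf NOMA}^{(\sf num)}$, matching the explicit forms \eqref{eq : A noma}--\eqref{eq : B noma}. The scalar factors $\lambda_{\sf NOMA}^{(\sf num)}$ and $\lambda_{\sf NOMA}^{(\sf den)}$ are chosen as in Appendix A.1 so that their ratio equals the objective value, which is exactly the eigenvalue in \eqref{eq : noma corollary}. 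Left-multiplying by $\bF{B}_{\sf NOMA}^{-1}$ then yields the stated form $\bF{B}_{\sf NOMA}^{-1}\bF{A}_{\sf NOMA}\bar{\bF{f}} = \lambda_{\sf NOMA}\bar{\bF{f}}$.

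The main obstacle is purely bookkeeping: one has to track the scalar prefactors $-1/\alpha$, $-\beta$, and the ratio-denominators $\sum_{i\ge k}(\cdot)^{-\beta}$ carefully so that after collecting terms they coincide with the explicit expressions \eqref{eq : A noma} and \eqref{eq : B noma}. Because the NOMA objective is a strict subset of the terms appearing in $\lambda_{\sf nc}$, with every summand indexed by $\ell'<k$ absent and each $\CMcal{L}_\ell$ replaced by a singleton, no new calculus is required beyond what the appendix already supplies; the result follows by specializing the non-colluding derivation accordingly, which is why the authors describe the proof as ``straightforward'' given Appendix A.1.
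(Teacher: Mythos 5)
Your proposal is correct and follows exactly the paper's route: the authors' own proof is to take the partial derivative $\partial\lambda_{\sf nc}/\partial\bar{\bF{f}}^{\sf H}$ from Appendix A.1, delete the $\ell'<k$ (secrecy/max) block, collapse each layer to a singleton, and then perform the same $\lambda^{(\sf num)}/\lambda^{(\sf den)}$ rescaling to arrive at the NEPv form. Aside from a minor slip in which scalar multiplies $\widetilde{\bF{A}}$ versus $\widetilde{\bF{B}}$ (one wants $\bF{A}_{\sf NOMA}=\lambda^{(\sf num)}\widetilde{\bF{A}}$ and $\bF{B}_{\sf NOMA}=\lambda^{(\sf den)}\widetilde{\bF{B}}$, after which $\bF{A}_{\sf NOMA}\bar{\bF{f}}=\lambda_{\sf NOMA}\bF{B}_{\sf NOMA}\bar{\bF{f}}$ follows immediately), the argument matches the paper's.
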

Now, to find the principal eigenvector of \eqref{eq : noma corollary}, we propose the GPI-NOMA algorithm. 
\begin{equation}
\bar{\bF{f}}_{(t)} \leftarrow \frac{\bF{B}_{\sf NOMA}^{-1}(\bar{\bF{f}}_{(t-1)})\bF{A}_{\sf NOMA}(\bar{\bF{f}}_{(t-1)})\bar{\bF{f}}_{(t-1)}}{\left\|\bF{B}_{\sf NOMA}^{-1}(\bar{\bF{f}}_{(t-1)})\bF{A}_{\sf NOMA}(\bar{\bF{f}}_{(t-1)})\bar{\bF{f}}_{(t-1)} \right\|}.
\end{equation}
We repeat this process until the convergence criterion is met. For the convergence condition, we use $\left\|\bar{\bF{f}}_{(t)}-\bar{\bF{f}}_{(t-1)} \right\|<\epsilon$, where $\epsilon$ is the predetermined tolerance level.

\section{Numerical Results}
In this section, we evaluate the ergodic sum secrecy spectral efficiency to validate the performance of the proposed GPI-HIA. For comparison, we consider the following baseline methods:
\begin{itemize}
    \item Maximum ratio transmission (MRT): since we assume a multigroup multicast scenario, it is not feasible to compute a precoding vector in a straightforward way. To obtain the precoding vectors, we construct $\bF{h}_k = \sum_{i\in\CMcal{L}_k}\bF{h}_{i,k}$ and use $\bF{h}_k$ as an effective channel vector for the users in $\CMcal{L}_k$. With this, 
    the precoding vectors are designed by matching the effective channel, i.e., $\bF{f}_k = \bF{h}_k$.
    \item Zero-forcing (ZF): we also use the effective channel vectors defined above. The precoding vectors are designed by following ZF methods as $\bF{F}  = \bF{H}\left(\bF{H}^{\sf H}\bF{H} \right)^{-1}$, where $\bF{F} = \left[\bF{f}_1,\cdots,\bF{f}_K \right]$ and $\bF{H} = \left[\bF{h}_1,\cdots,\bF{h}_K \right]$.
    \item Multicast weighted minimum mean square error (WMMSE): in this method, the conventional WMMSE precoding \cite{christensen:twc:08} is modified for a multi-group multicast scenario \cite{yalcin:tcom:19}. Specifically, using the correspondence between mutual information and MMSE, a WMSE minimization problem was formulated. To release the non-convexity of the problem, the original problem was reformulated to a smooth constrained optimization problem with auxiliary variables. To solve this reformulated problem, the exponential penalty method was adopted. We note that this method finds a local optimal point of a general multi-group multicast scenario, while it does not incorporate our HIA condition.
\end{itemize}
In the proposed GPI-HIA setting, the LogSumExp approximation parameter $\alpha$ is determined as follows: we initialize it as $\alpha_1 = 10$ and if the proposed algorithm does not converge within a certain number of iterations $T = 50$, updating it as $\alpha_{n+1} = 0.9\alpha_n$.

\begin{figure}[!t]
     \centerline{\resizebox{0.5\columnwidth}{!}{\includegraphics{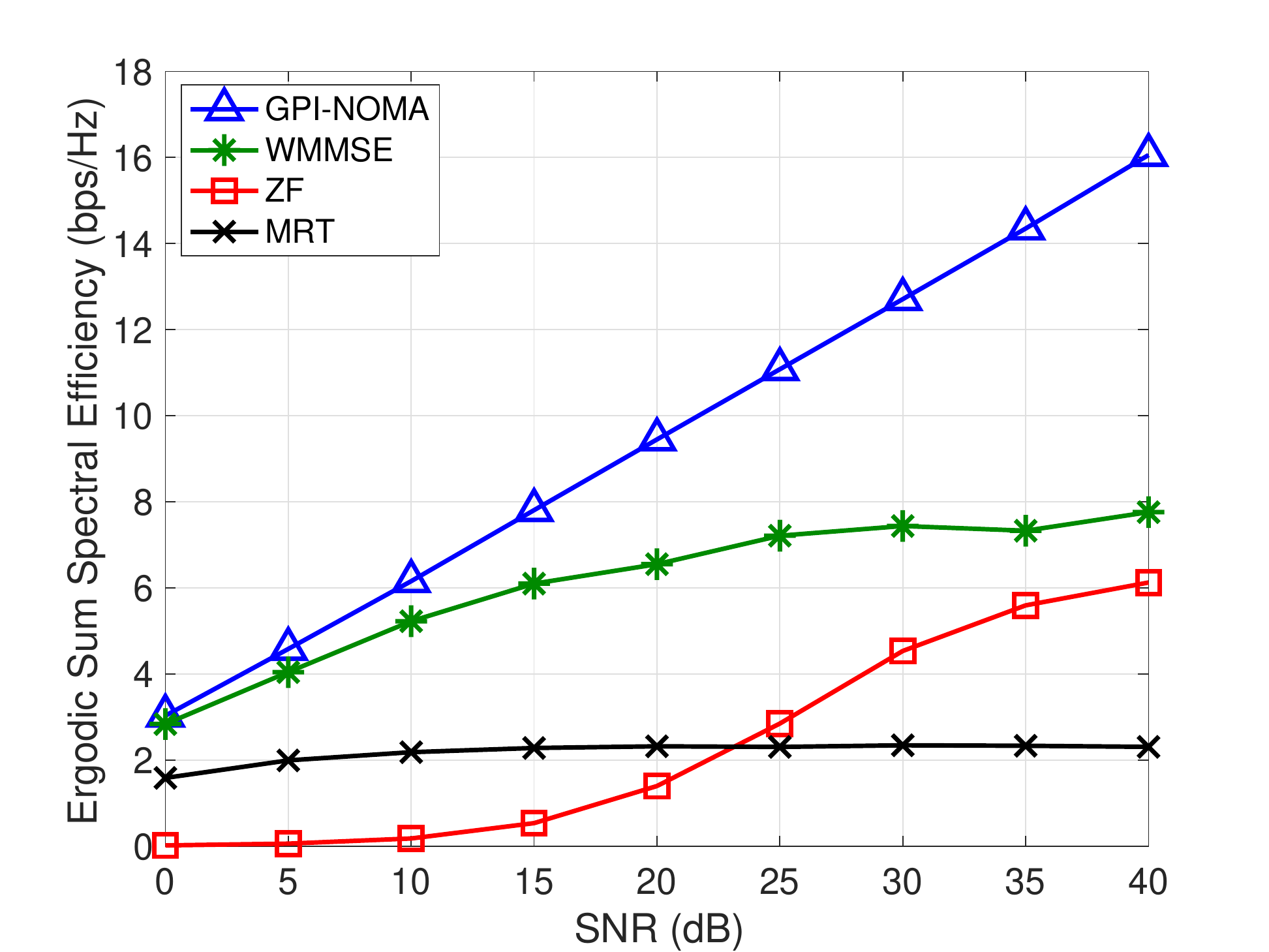}}}
     \caption{Ergodic sum spectral efficiency in downlink NOMA system when SNR is increasing. The simulation parameters are $N = 4$, $K = 8$, $\kappa = 0.4$ $|\CMcal{L}_k| = 1$, $\epsilon = 0.01$, $\Delta_k  = \pi/6$ for $k\in \{1,\cdots,K\}$, and $\theta_k = \frac{\pi}{6}$.}
     \label{fig:result NOMA}
\end{figure}

 
\subsection{Downlink MISO NOMA}
At first, we illustrate the comparison of the ergodic sum spectral efficiency for the downlink MU-MISO system in \fig{fig:result NOMA} when the signal-to-noise ratio (SNR) is increasing. As baseline methods, we use the conventional precoding methods: 1) MRT, 2) ZF, and 3) WMMSE. 
Note that the WMMSE method in the downlink NOMA case is a reduced version of the multicast WMMSE \cite{yalcin:tcom:19} by assuming that each layer has only one user. 
We assume that all the users are clustered in a specific location. 

We observe that the proposed algorithm outperforms the baseline methods in all the SNR regimes. Specifically, when the SNR is $40$ dB, it shows about $200\%$ gains over the WMMSE precoding method. This is because the proposed algorithm is designed by incorporating imperfect CSIT.

\begin{figure}[!t]
     \centerline{\resizebox{0.6\columnwidth}{!}{\includegraphics{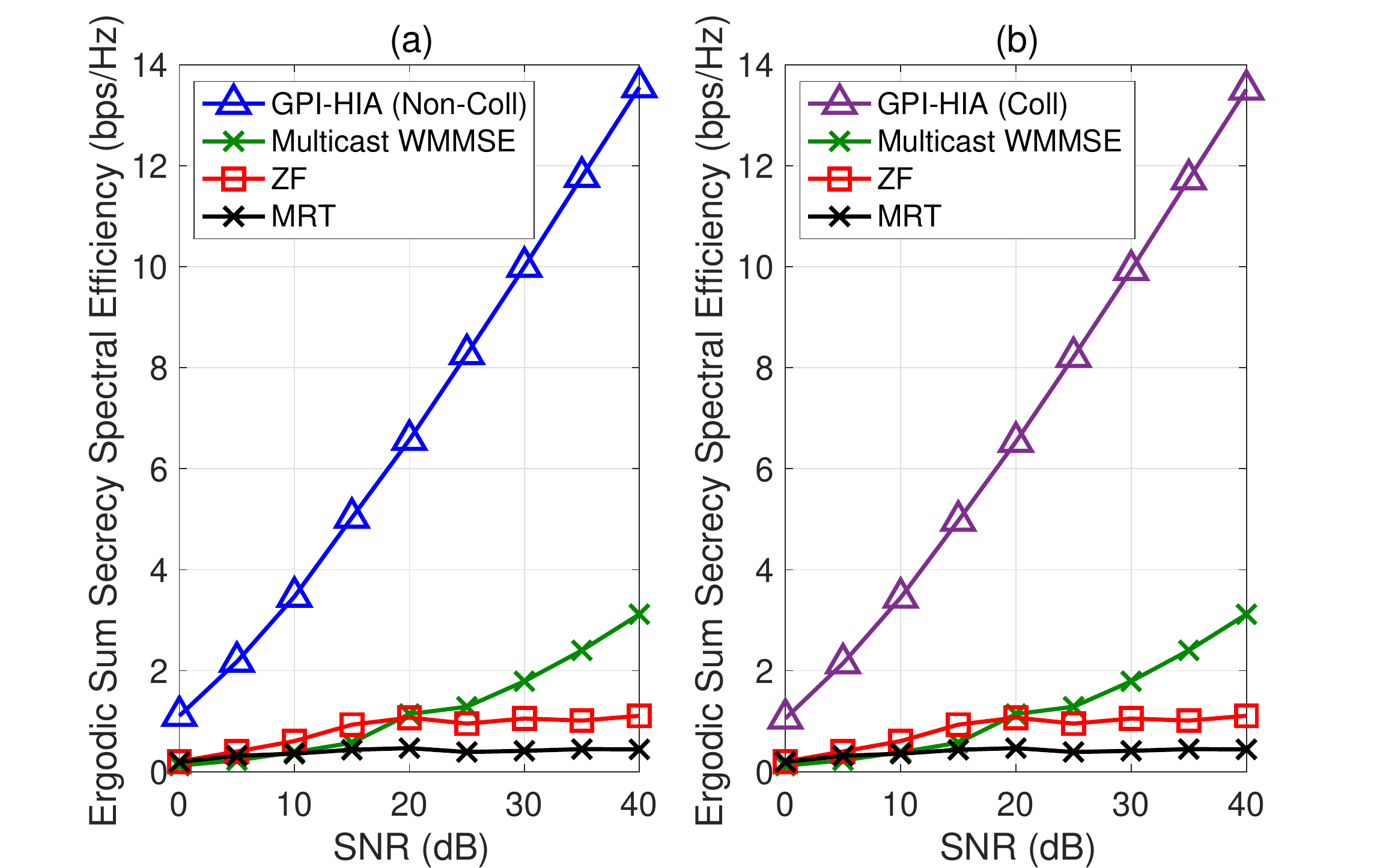}}}
     \caption{Ergodic sum secrecy rate with (a) non-colluding case and (b) colluding case when SNR is increasing. The simulation parameters are $N = 6$, $K = 3$, $|\CMcal{L}_k| = 2$, $\epsilon = 0.01$, $\beta_m = 1$, $\Delta_m  = \pi/6$ for $m\in \CMcal{M}$, and $\theta_m = [0,2\pi]$.}
     \label{fig:result per SNR}
\end{figure}

\subsection{Secrecy Rate Comparison per SNR} 
In this subsection, 
we illustrate the comparison for the ergodic sum secrecy spectral efficiency per SNR in \fig{fig:result per SNR}. 
\fig{fig:result per SNR}(a) shows the performance comparison in the non-colluding case and \fig{fig:result per SNR}(b) shows the performance comparison in the colluding case.
In the both case, the proposed algorithm provides the highest performance compared to the baseline method. The noticeable point is that more gains are achieved in the high SNR regime. Especially, when the transmit SNR is 40dB, the proposed GPI-HIA provides about $460\%$ gains over WMMSE. This is because the GPI-HIA is designed by properly incorporating the intertwined HIA condition. If the HIA condition is not reflected into the algorithm, the degrees-of-freedom of the sum secrecy spectral efficiency is degraded as observed in \fig{fig:result per SNR}.
\subsection{Secrecy Rate Comparison per the Number of the Users}
In this subsection, we illustrate the simulation results in \fig{fig:result per M} when the number of users increases. 
\fig{fig:result per M}(a) shows the performance comparison in the non-colluding case and \fig{fig:result per M}(b) shows the performance comparison in the colluding case. 
Since a multi-group multicast message scenario is considered, it is natural that the sum secrecy spectral efficiency decreases as the number of users increases. In \fig{fig:result per M}, ZF and MRT achieve almost 0 sum secrecy spectral efficiency if the number of users is larger than $15$. On contrary to that, the proposed GPI-HIA gets robust sum secrecy spectral efficiency compared to baseline methods. 




\begin{figure}[!t]
     \centerline{\resizebox{0.6\columnwidth}{!}{\includegraphics{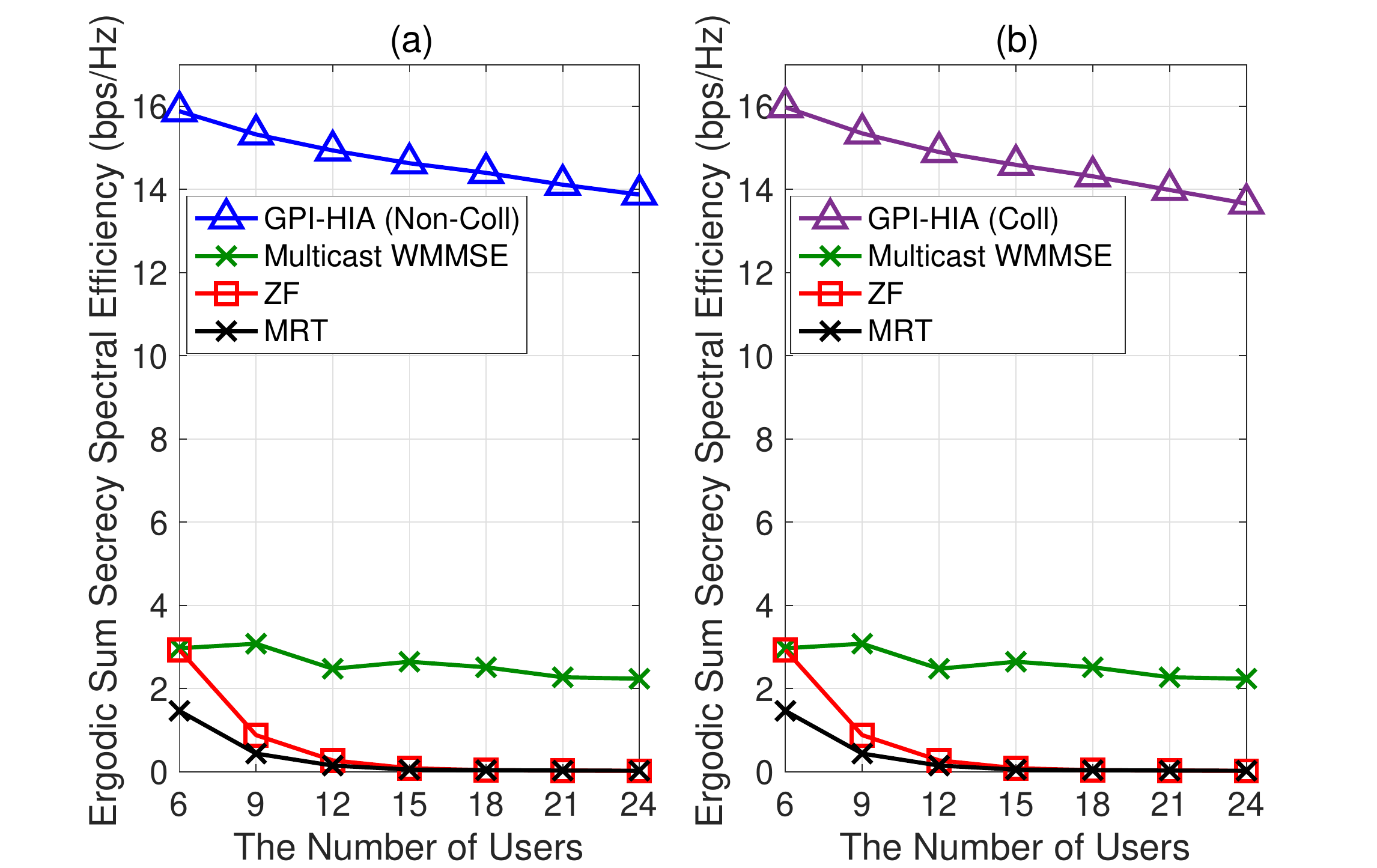}}}
     \caption{Ergodic sum secrecy rate when the number of users is increasing in (a) non-colluding case and (b) colluding case. The simulation parameters are $N = 24$, $K = 3$ $|\CMcal{L}_k| = \frac{|\CMcal{M}|}{K}$, $\epsilon = 0.01$, SNR = $40$dB, $\beta_m=1$, $\Delta_m = \pi/6$ for $m\in \CMcal{M}$, and $\theta_m = [0,2\pi]$.}
     \label{fig:result per M}
\end{figure}

\subsection{Convergence}
\fig{fig : convergence} shows the convergence results in terms of $\lambda_{\sf nc}(\bar{\bF{f}})$, $\lambda_{\sf c}(\bar{\bF{f}})$ and residual $\left\|\bar{\bF{f}}_{(t)}-\bar{\bF{f}}_{(t-1)} \right\|$ during 50 iterations. It is observed that both GPI-HIA (Non-Coll) and GPI-HIA (Coll) converge within  10 iterations. 
This provides an empirical evidence that the proposed GPI-HIA reaches a desirable local optimal point fast. 
\begin{figure}[!t]
     \centerline{\resizebox{0.7\columnwidth}{!}{\includegraphics{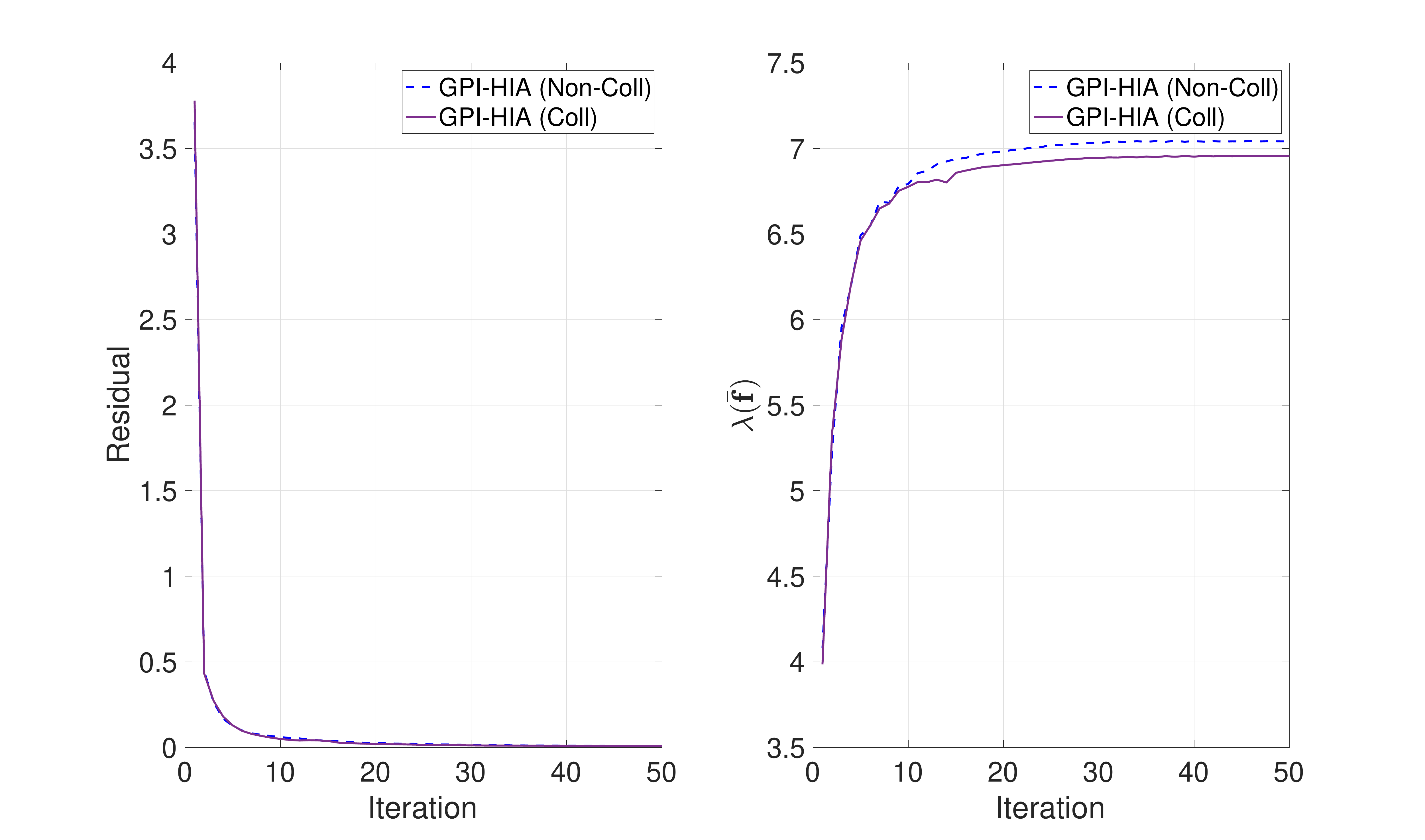}}}
     \caption{Convergence results in terms of $\lambda(\bar{\bF{f}})$ and the residual $\left\|\bar{\bF{f}}_{(t)}-\bar{\bF{f}}_{(t-1)} \right\|$. The simulation parameters are $N = 6$, $K = 3$, $|\CMcal{L}_1| = 3$, $|\CMcal{L}_2| = 2$, $|\CMcal{L}_3| = 1$, $\epsilon = 0.01$, SNR = $20$dB, $\Delta_m = \pi/6$ for $m\in \CMcal{M}$, $\theta_m = [0,2\pi]$ and $\beta_m = 1$.}
     \label{fig : convergence}
\end{figure}

\subsection{Fairness}
In this subsection, we draw system level simulation results for resolving the fairness issue by incorporating the PF policy in the GPI-HIA. 
For a large scale fading parameter $\beta_m$ used in system level simulation, we adopt the log-distance pathloss model in \cite{erceg:jsac:99}. 
The distance between the BS and the users is 100m to 500m. We consider a 2.4 GHz carrier frequency with 10MHz bandwidth, -174 dBm/Hz noise power spectral density, and 9 dB noise figure. 

The empirical cumulative distribution function (CDF) for the secrecy rate is depicted in \fig{fig : result cdf}(a). The GPI-HIA-PF shows a steeper curve than the GPI-HIA, i.e., the proposed algorithm with PF policy provides more uniform secrecy rate to the users by resolving the fairness issue. 
For better understanding, we also present a snapshot of the achieved secrecy spectral efficiency for each layer in \fig{fig : result cdf}(b) and \fig{fig : result cdf}(c).


\subsubsection{Non-Colluding Case}
as the blue bar in the \fig{fig : result cdf}(b) shows, the power is concentrated on the message $s_1$ intended to the lowest layer users. After applying the PF policy to the GPI-HIA, the yellow bar in the \fig{fig : result cdf}(b), the secrecy rate for the messages intended to the higher layer increases.


\subsubsection{Colluding Case}
as the purple bar in \fig{fig : result cdf}(c) shows, we confirm that the power is concentrated on the message intended to the lower layer users. After applying the PF policy to the GPI-HIA, the green bar in \fig{fig : result cdf}(c), the secrecy rate for the messages intended to the higher layer increases.

\begin{figure}[!t]
     \centerline{\resizebox{0.7\columnwidth}{!}{\includegraphics{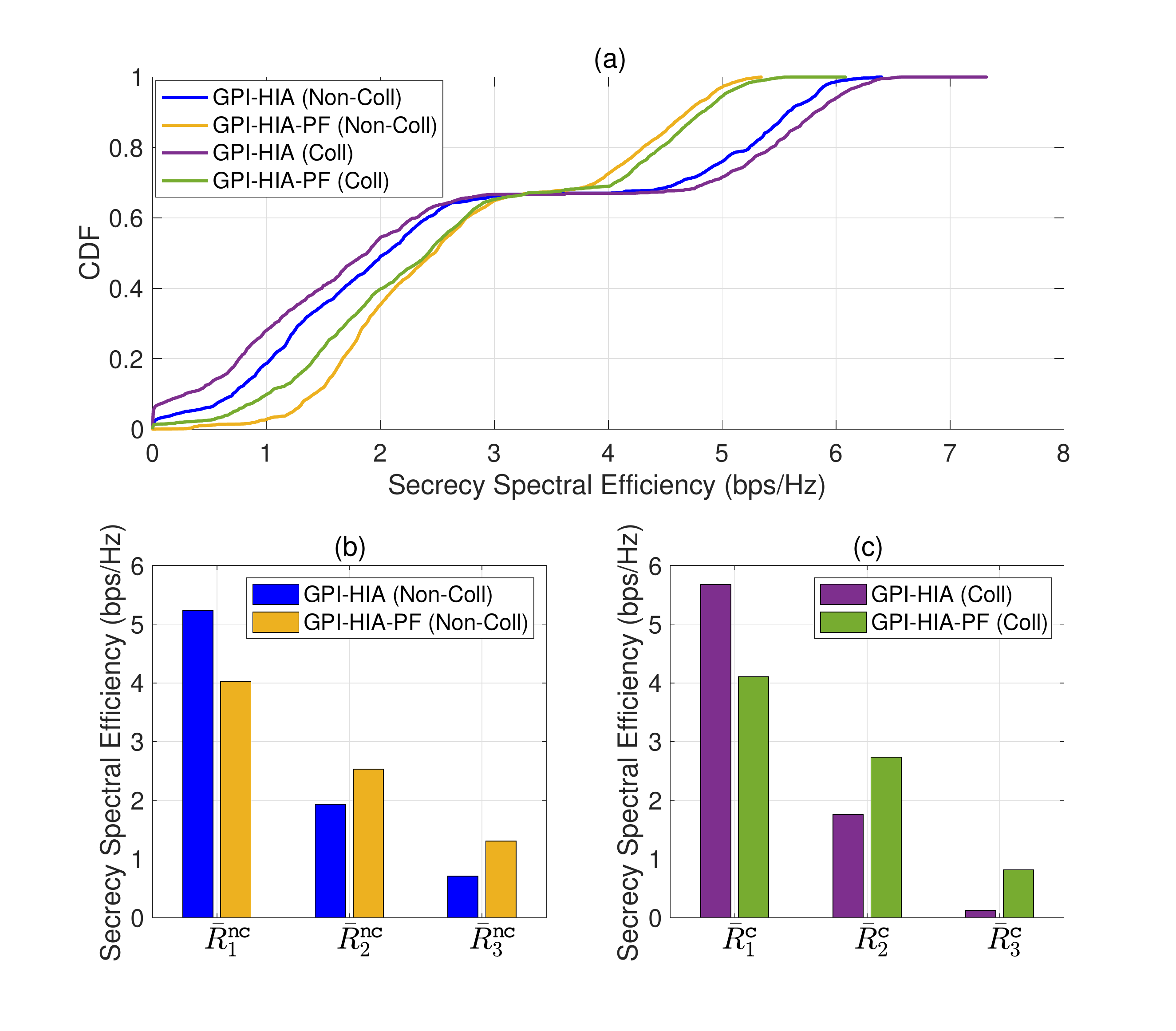}}}
     \caption{Secrecy rate of each message in (a) non-colluding case and (b) colluding case. The simulation parameters are $N = 6$, $K = 3$, $|\CMcal{L}_1| = 3$, $|\CMcal{L}_2| = 2$, $|\CMcal{L}_3| = 1$, $\epsilon = 0.01$, $\delta = 0.2$, SNR = $20$dB, $\Delta_m = \pi/6$ for $m\in \CMcal{M}$ and $\theta_m = [0,2\pi]$.}
     \label{fig : result cdf}
\end{figure}

\section{Conclusion}
In this paper, we have considered a new security model that generalizes conventional physical layer security, referred to as HIA. Although the HIA is useful to reflect a hierarchical security structure, optimizing such a system is highly challenging due to its intertwined rate formations. Resolving the challenges, we have proposed new precoding methods to maximize the sum secrecy rate by considering the non-colluding and the colluding cases. Specifically, we have approximated non-smooth functions by using the LogSumExp technique and have reformulated the optimization problem as a form of Rayleigh quotients with a higher dimensional vector. With this form, we have derived the first-order optimality condition and cast this condition into a NEPv in which our objective function and precoding vector are mapped to the eigenvalue and eigenvector, respectively. Accordingly, our proposed methods compute the principal eigenvector to obtain a local optimal solution. As a byproduct, we have shown that the proposed precoding framework is also applicable to a sum rate maximization algorithm for downlink NOMA systems because the presented HIA model includes downlink NOMA with a fixed decoding order as a special case. 
Simulation results have validated that our methods provide significant performance improvement over existing baseline methods. In addition to this, our methods are beneficial in an implementation perspective since no off-the-shelf solver is used. 


\section*{Appendix A.1\\Proof of Lemma 1}
We first derive first-order KKT condition, i.e., $\frac{\partial\lambda_{\sf nc}(\bar{\bF f})}{\partial\bar{\bF f}^{\sf H}} = 0$ of the problem \eqref{eq : objective rayleigh quotient non coll} where 
\begin{equation}
    \lambda_{\sf nc}(\bar{\bF f})=\sum_{k=1}^K\left(-\frac{1}{\alpha}\log\left(\sum_{\ell = k}^K\sum_{i\in \CMcal{L}_{\ell}}\left(\frac{\bar{\bF{f}}^{\sf H}\bF{A}_{k,i,\ell}\bar{\bF{f}}}{\bar{\bF{f}}^{\sf H}\bF{B}_{k,i,\ell}\bar{\bF{f}}}\right)^{-\beta} \right)
     -\frac{1}{\alpha}\log\left(\sum_{\ell'=1}^{k-1}\sum_{i'\in\CMcal{L}_{\ell'}}\left(\frac{\bar{\bF{f}}^{\sf H}\bF{A}_{k,i',\ell'}\bar{\bF{f}}}{\bar{\bF{f}}^{\sf H}\bF{B}_{k,i',\ell'}\bar{\bF{f}}}\right)^{\beta}\right)\right).
\end{equation}
Using the derivative property,
the partial derivative of $\lambda_{\sf nc}(\bar{\bF f})$ is obtained as
\begin{multline}\label{eq : partial derivative of lambda nc}
    \frac{\lambda_{\sf nc}(\bar{\bF{f}})}{\partial\bar{\bF{f}}^{\sf H}}=
    -\frac{1}{\alpha}\sum_{k=1}^K\left(\frac{\sum\limits_{\ell=k}^K\sum\limits_{i\in\CMcal{L}_{\ell}}\left(-\beta\left(\frac{\bar{\bF{f}}^{\sf H}\bF{A}_{k,i,\ell}\bar{\bF{f}}}{\bar{\bF{f}}^{\sf H}\bF{B}_{k,i,\ell}\bar{\bF{f}}}\right)^{-\beta}\left(\frac{\bF{A}_{k,i,\ell}\bar{\bF{f}}}{\bar{\bF{f}}^{\sf H}\bF{A}_{k,i,\ell}\bar{\bF{f}}}
    - \frac{\bF{B}_{k,i,\ell}\bar{\bF{f}}}{\bar{\bF{f}}^{\sf H}\bF{B}_{k,i,\ell}\bar{\bF{f}}}  \right) \right)}{\sum\limits_{\ell=k}^K\sum\limits_{i\in\CMcal{L}_{\ell}}\left(\frac{\bar{\bF{f}}^{\sf H}\bF{A}_{a,n}\bar{\bF{f}}}{\bar{\bF{f}}^{\sf H}\bF{B}_{a,n}\bar{\bF{f}}}\right)^{-\beta}}\right.\\
\left.-\frac{\sum\limits_{\ell'=1}^{k-1}\sum\limits_{i'\in\CMcal{L}_{\ell'}}\left(\beta\left(\frac{\bar{\bF{f}}^{\sf H}\bF{A}_{k,i',\ell'}\bar{\bF{f}}}{\bar{\bF{f}}^{\sf H}\bF{B}_{k,i',\ell'}\bar{\bF{f}}}\right)^{\beta}\left(\frac{\bF{A}_{k,i',\ell'}\bar{\bF{f}}}{\bar{\bF{f}}^{\sf H}\bF{A}_{k,i,\ell'}\bar{\bF{f}}}- \frac{\bF{B}_{k,i',\ell'}\bar{\bF{f}}}{\bar{\bF{f}}^{\sf H}\bF{B}_{k,i',\ell'}\bar{\bF{f}}}  \right) \right)}{\sum\limits_{\ell'=1}^{k-1}\sum\limits_{i'\in\CMcal{L}_{\ell'}}\left(\frac{\bar{\bF{f}}^{\sf H}\bF{A}_{k,i',\ell'}\bar{\bF{f}}}{\bar{\bF{f}}^{\sf H}\bF{B}_{k,i',\ell'}\bar{\bF{f}}}\right)^{\beta}}\right).
\end{multline}
The first-order KKT condition holds when \eqref{eq : partial derivative of lambda nc} equals to 0. Defining $\bF{A}_{\sf nc}(\bar{\bF{f}})$, $\bF{B}_{\sf nc}(\bar{\bF{f}})$ and $\lambda_{\sf nc}(\bar{\bF f})$ as \eqref{eq : A non coll}, \eqref{eq : B non coll} and \eqref{eq : lambda non coll}, the first-order KKT condition is rearranged as
\begin{equation}
    {\bF{A}}_{\sf nc}(\bar{\bF{f}})\bar{\bF{f}} = \lambda_{\sf nc}(\bar{\bF{f}}){\bF{B}}_{\sf nc}(\bar{\bF{f}})\bar{\bF{f}}\Leftrightarrow \bF{B}_{\sf nc}^{-1}(\bar{\bF{f}}){\bF{A}}_{\sf nc}(\bar{\bF{f}})\bar{\bF{f}} = \lambda_{\sf nc}(\bar{\bF{f}})\bar{\bF{f}}.
\end{equation}
This completes the proof. \qed

\section*{Appendix A.2\\Proof of Lemma 1}
We first derive first-order KKT condition, i.e., $\frac{\partial\lambda_{\sf c}(\bar{\bF f})}{\partial\bar{\bF f}^{\sf H}} = 0$ of the problem \eqref{eq : objective rayleigh quotient coll} where 
\begin{equation}
    \lambda_{\sf c}(\bar{\bF{f}}) = \sum_{k=1}^K\left( -\frac{1}{\alpha}\log\left(\sum_{\ell=k}^K\sum_{i\in\CMcal{L}_\ell}\left(\frac{\bar{\bF{f}}^{\sf H}\bF{A}_{k,i,\ell}\bar{\bF{f}}}{\bar{\bF{f}}^{\sf H}\bF{B}_{k,i,\ell}\bar{\bF{f}}} \right)^{-\beta}\right)-\frac{1}{\log 2}\log\left(\sum_{\ell' = 1}^{k-1}\sum_{i'\in\CMcal{L}_{\ell'}}\left(\frac{\bar{\bF{f}}^{\sf H}\bF{C}_{k,i',\ell'}\bar{\bF{f}}}{\bar{\bF{f}}^{\sf H}\bF{D}_{k,i',\ell'}\bar{\bF{f}}} \right) \right) \right).
\end{equation}
Using the similar technique in Lemma \ref{lem:noncoll}, the partial derivative of $\lambda_{\sf c}(\bar{\bF f})$ is obtained by using the above calculation as 
\begin{multline}\label{eq : partial derivative lambda c}
            \frac{\partial \lambda_{\sf c}(\bar{\bF{f}})}{\partial\bar{\bF{f}}^{\sf H}} = \sum_{k=1}^K\left(-\frac{1}{\alpha}\frac{\sum\limits_{\ell = k}^K\sum\limits_{i\in\Cal{L}_{\ell}}\left(-\beta\left(\frac{\bar{\bF{f}}^{\sf H}\bF{A}_{k,i,\ell}\bar{\bF{f}}}{\bar{\bF{f}}^{\sf H}\bF{B}_{k,i,\ell}\bar{\bF{f}}}\right)^{-\beta}\left(\frac{\bF{A}_{k,i,\ell}\bar{\bF{f}}}{\bar{\bF{f}}^{\sf H}\bF{A}_{k,i,\ell}\bar{\bF{f}}}-\frac{\bF{B}_{k,i,\ell}\bar{\bF{f}}}{\bar{\bF{f}}^{\sf H}\bF{B}_{k,i,\ell}\bar{\bF{f}}} \right)  \right)}{\sum\limits_{\ell = k}^K\sum\limits_{i\in\Cal{L}_{\ell}}\left(\frac{\bar{\bF{f}}^{\sf H}\bF{A}_{k,i,\ell}\bar{\bF{f}}}{\bar{\bF{f}}^{\sf H}\bF{B}_{k,i,\ell}\bar{\bF{f}}}\right)^{-\beta}}\right.\\
            \left. -\frac{1}{\log2}\frac{\sum\limits_{\ell' = 1}^{k-1}\sum\limits_{i'\in\Cal{L}_{\ell'}}\frac{\bar{\bF{f}}^{\sf H}\bF{C}_{k,i',\ell'}\bar{\bF{f}}}{\bar{\bF{f}}^{\sf H}\bF{D}_{k,i',\ell'}\bar{\bF{f}}}\left(\frac{\bF{C}_{k,i',\ell'}\bar{\bF{f}}}{\bar{\bF{f}}^{\sf H}\bF{C}_{k,i',\ell'}\bar{\bF{f}}}-\frac{\bF{D}_{k,i',\ell'}\bar{\bF{f}}}{\bar{\bF{f}}^{\sf H}\bF{D}_{k,i',\ell'}\bar{\bF{f}}} \right)}{\sum\limits_{\ell' = 1}^{k-1}\sum\limits_{i'\in\Cal{L}_{\ell'}}\frac{\bar{\bF{f}}^{\sf H}\bF{C}_{k,i',\ell'}\bar{\bF{f}}}{\bar{\bF{f}}^{\sf H}\bF{D}_{k,i',\ell'}\bar{\bF{f}}}} \right).
\end{multline}
The first-order KKT condition holds when \eqref{eq : partial derivative lambda c} equals to 0. Defining $\bF{A}_{\sf c}(\bar{\bF{f}})$, $\bF{B}_{\sf c}(\bar{\bF{f}})$ and $\lambda_{\sf c}(\bar{\bF f})$ as \eqref{eq : A coll}, \eqref{eq : B coll} and \eqref{eq : lambda coll}, the first-order KKT condition is rearranged as
\begin{equation}
    {\bF{A}}_{\sf c}(\bar{\bF{f}})\bar{\bF{f}} = \lambda_{\sf c}(\bar{\bF{f}}){\bF{B}}_{\sf c}(\bar{\bF{f}})\bar{\bF{f}}\Leftrightarrow \bF{B}_{\sf c}^{-1}(\bar{\bF{f}}){\bF{A}}_{\sf c}(\bar{\bF{f}})\bar{\bF{f}} = \lambda_{\sf c}(\bar{\bF{f}})\bar{\bF{f}}.
\end{equation}

This completes the proof. \qed

\bibliographystyle{IEEEtran}
\bibliography{ref_HIA}

\begin{thebibliography}{10}
\providecommand{\url}[1]{#1}
\csname url@samestyle\endcsname
\providecommand{\newblock}{\relax}
\providecommand{\bibinfo}[2]{#2}
\providecommand{\BIBentrySTDinterwordspacing}{\spaceskip=0pt\relax}
\providecommand{\BIBentryALTinterwordstretchfactor}{4}
\providecommand{\BIBentryALTinterwordspacing}{\spaceskip=\fontdimen2\font plus
\BIBentryALTinterwordstretchfactor\fontdimen3\font minus
  \fontdimen4\font\relax}
\providecommand{\BIBforeignlanguage}[2]{{%
\expandafter\ifx\csname l@#1\endcsname\relax
\typeout{** WARNING: IEEEtran.bst: No hyphenation pattern has been}%
\typeout{** loaded for the language `#1'. Using the pattern for}%
\typeout{** the default language instead.}%
\else
\language=\csname l@#1\endcsname
\fi
#2}}
\providecommand{\BIBdecl}{\relax}
\BIBdecl

\bibitem{lee:gc:21}
K.~Lee, J.~Choi, D.~Kim, and J.~Park, ``Hierarchical information accessibility
  in downlink {MIMO} systems,'' \emph{to appear in IEEE Glob. Commun. Conf.},
  2021.

\bibitem{massey:proc:88}
J.~L. {Massey}, ``An introduction to contemporary cryptology,''
  \emph{Proceedings of the IEEE}, vol.~76, no.~5, pp. 533--549, May 1988.

\bibitem{wyner:bell:75}
A.~D. {Wyner}, ``The wire-tap channel,'' \emph{The Bell System Technical
  Journal}, vol.~54, no.~8, pp. 1355--1387, Oct. 1975.

\bibitem{porambage:ojcomm:21}
P.~Porambage, G.~Gür, D.~P.~M. Osorio, M.~Liyanage, A.~Gurtov, and
  M.~Ylianttila, ``The roadmap to {6G} security and privacy,'' \emph{IEEE Open
  J. of the Commun. Society}, vol.~2, pp. 1094--1122, 2021.

\bibitem{zhang:tvt:19}
W.~{Zhang}, J.~{Chen}, Y.~{Kuo}, and Y.~{Zhou}, ``Transmit beamforming for
  layered physical layer security,'' \emph{IEEE Trans. Veh. Technol.}, vol.~68,
  no.~10, pp. 9747--9760, 2019.

\bibitem{li:tvt:18}
M.~{Li}, G.~{Ti}, and Q.~{Liu}, ``Secure beamformer designs in {MU-MIMO}
  systems with multiuser interference exploitation,'' \emph{IEEE Trans. Veh.
  Technol.}, vol.~67, no.~9, pp. 8288--8301, Sep. 2018.

\bibitem{zhao:tifs:15}
P.~{Zhao}, M.~{Zhang}, H.~{Yu}, H.~{Luo}, and W.~{Chen}, ``Robust beamforming
  design for sum secrecy rate optimization in {MU-MISO} networks,'' \emph{IEEE
  Trans. Info. Forensics and Security}, vol.~10, no.~9, pp. 1812--1823, Sep.
  2015.

\bibitem{sheng:tsp:18}
Z.~{Sheng}, H.~D. {Tuan}, T.~Q. {Duong}, and H.~V. {Poor}, ``Beamforming
  optimization for physical layer security in {MISO} wireless networks,''
  \emph{IEEE Trans. Signal Process.}, vol.~66, no.~14, pp. 3710--3723, Jul.
  2018.

\bibitem{choi:wclett:20}
J.~{Choi} and J.~{Park}, ``Secure{L}in{Q}: Joint precoding and scheduling for
  secure device-to-device networks,'' \emph{IEEE Wireless Commun. Lett.},
  vol.~9, no.~12, pp. 2078--2082, 2020.

\bibitem{choi:tvt:21}
------, ``Sum secrecy spectral efficiency maximization in downlink {MU-MIMO}:
  Colluding eavesdroppers,'' \emph{IEEE Trans. Veh. Technol.}, vol.~70, no.~1,
  pp. 1051--1056, 2021.

\bibitem{yang:tcom:13}
N.~{Yang}, P.~L. {Yeoh}, M.~{Elkashlan}, R.~{Schober}, and I.~B. {Collings},
  ``Transmit antenna selection for security enhancement in {MIMO} wiretap
  channels,'' \emph{IEEE Trans. Commun.}, vol.~61, no.~1, pp. 144--154, Jan.
  2013.

\bibitem{yang:tcom:16}
M.~{Yang}, D.~{Guo}, Y.~{Huang}, T.~Q. {Duong}, and B.~{Zhang}, ``Physical
  layer security with threshold-based multiuser scheduling in multi-antenna
  wireless networks,'' \emph{IEEE Trans. Commun.}, vol.~64, no.~12, pp.
  5189--5202, 2016.

\bibitem{ding:tcom:17}
Z.~{Ding}, Z.~{Zhao}, M.~{Peng}, and H.~V. {Poor}, ``On the spectral efficiency
  and security enhancements of {NOMA} assisted multicast-unicast streaming,''
  \emph{IEEE Trans. Commun.}, vol.~65, no.~7, pp. 3151--3163, 2017.

\bibitem{li:tvt:17}
Y.~{Li}, M.~{Jiang}, Q.~{Zhang}, Q.~{Li}, and J.~{Qin}, ``Secure beamforming in
  downlink {MISO} nonorthogonal multiple access systems,'' \emph{IEEE Trans.
  Veh. Technol.}, vol.~66, no.~8, pp. 7563--7567, 2017.

\bibitem{hsu:tvt:17}
G.~W. {Hsu}, B.~{Liu}, H.~H. {Wang}, and H.~J. {Su}, ``Joint beamforming for
  multicell multigroup multicast with per-cell power constraints,'' \emph{IEEE
  Trans. Veh. Technol.}, vol.~66, no.~5, pp. 4044--4058, 2017.

\bibitem{sadeghi:twc:18}
M.~{Sadeghi}, E.~{Bjrnson}, E.~G. {Larsson}, C.~{Yuen}, and T.~{Marzetta},
  ``Joint unicast and multi-group multicast transmission in massive {MIMO}
  systems,'' \emph{IEEE Trans. Wireless Commun.}, vol.~17, no.~10, pp.
  6375--6388, 2018.

\bibitem{adhikary:tit:13}
A.~{Adhikary}, J.~{Nam}, J.~{Ahn}, and G.~{Caire}, ``Joint spatial division and
  multiplexing—the large-scale array regime,'' \emph{IEEE Trans. Inf.
  Theory}, vol.~59, no.~10, pp. 6441--6463, 2013.

\bibitem{dai:twc:16}
M.~{Dai}, B.~{Clerckx}, D.~{Gesbert}, and G.~{Caire}, ``A rate splitting
  strategy for massive {MIMO} with imperfect {CSIT},'' \emph{IEEE Trans.
  Wireless Commun.}, vol.~15, no.~7, pp. 4611--4624, 2016.

\bibitem{zhou:jsac:18}
F.~{Zhou}, Z.~{Chu}, H.~{Sun}, R.~Q. {Hu}, and L.~{Hanzo}, ``Artificial noise
  aided secure cognitive beamforming for cooperative {MISO-NOMA} using
  {SWIPT},'' \emph{IEEE J. Sel. Areas Commun.}, vol.~36, no.~4, pp. 918--931,
  2018.

\bibitem{tian:splett:17}
M.~{Tian}, Q.~{Zhang}, S.~{Zhao}, Q.~{Li}, and J.~{Qin}, ``Secrecy sum rate
  optimization for downlink {MIMO} nonorthogonal multiple access systems,''
  \emph{IEEE Signal Process. Lett.}, vol.~24, no.~8, pp. 1113--1117, 2017.

\bibitem{kampeas:isit:16}
J.~{Kampeas}, A.~{Cohen}, and O.~{Gurewitz}, ``On secrecy rates and outage in
  multi-user multi-eavesdroppers {MISO} systems,'' in \emph{Proc. IEEE Int.
  Symp. Info. Th.}, Jul. 2016, pp. 2449--2453.

\bibitem{li:icc:11}
Q.~Li and W.-K. Ma, ``Multicast secrecy rate maximization for {MISO} channels
  with multiple multi-antenna eavesdroppers,'' in \emph{Proc. IEEE Int. Conf.
  on Comm.}, 2011, pp. 1--5.

\bibitem{zhu:tvt:20}
J.~Zhu, J.~Wang, Y.~Huang, K.~Navaie, Z.~Ding, and L.~Yang, ``On optimal
  beamforming design for downlink {MISO NOMA} systems,'' \emph{IEEE Trans. Veh.
  Technol.}, vol.~69, no.~3, pp. 3008--3020, 2020.

\bibitem{shen:tpami:10}
C.~{Shen} and H.~{Li}, ``On the dual formulation on boosting algorithms,''
  \emph{IEEE Trans. on Pattern Analysis and Machine Intelligence}, vol.~32,
  no.~12, pp. 2216--2231, 2010.

\bibitem{nielsen:en:16}
F.~{Nielsen} and K.~{Sun}, ``Guaranteed bounds on information-theoretic
  measures of univariate mixtures using piecewise log-sum-exp inequalities,''
  \emph{SIAM Journal on Matrix Analysis and Applications}, vol.~18, no.~12,
  2016.

\bibitem{cai:sjmaa:18}
Y.~{Cai}, L.~H. {Zhang}, Z.~{Bai}, and R.~C. {Li}, ``On an
  eigenvector-dependent nonlinear eigenvalue problem,'' \emph{SIAM Journal on
  Matrix Analysis and Applications}, vol.~39, no.~3, pp. 1360--1382, 2018.

\bibitem{park:arxiv:21}
\BIBentryALTinterwordspacing
J.~Park, J.~Choi, N.~Lee, W.~Shin, and H.~V. Poor, ``Rate-splitting multiple
  access for downlink {MIMO}: {A} generalized power iteration approach,''
  \emph{ArXiv Preprint}, 2021. [Online]. Available:
  \url{https://arxiv.org/abs/2108.06844}
\BIBentrySTDinterwordspacing

\bibitem{choi:arxiv:21}
\BIBentryALTinterwordspacing
J.~Choi, J.~Park, and N.~Lee, ``Energy efficiency maximization precoding for
  quantized massive {MIMO} systems,'' \emph{ArXiv Preprint}, 2021. [Online].
  Available: \url{https://arxiv.org/abs/2108.03048}
\BIBentrySTDinterwordspacing

\bibitem{choi:twc:20}
J.~{Choi}, N.~{Lee}, S.~{Hong}, and G.~{Caire}, ``Joint user selection, power
  allocation, and precoding design with imperfect {CSIT} for multi-cell
  {MU-MIMO} downlink systems,'' \emph{IEEE Trans. Wireless Commun.}, vol.~19,
  no.~1, pp. 162--176, Jan. 2020.

\bibitem{jevganij:access:19}
J.~Krivochiza, J.~Merlano~Duncan, S.~Andrenacci, S.~Chatzinotas, and
  B.~Ottersten, ``{FPGA} acceleration for computationally efficient
  symbol-level precoding in multi-user multi-antenna communication systems,''
  \emph{IEEE Access}, vol.~7, pp. 15\,509--15\,520, 2019.

\bibitem{viswanath:tit:02}
P.~{Viswanath}, D.~{Tse}, and R.~{Laroia}, ``Opportunistic beamforming using
  dumb antennas,'' \emph{IEEE Trans. Inf. Theory}, vol.~48, no.~6, pp.
  1277--1294, Jun. 2002.

\bibitem{park:twc:16}
J.~{Park}, N.~{Lee}, J.~G. {Andrews}, and R.~W. {Heath}, ``On the optimal
  feedback rate in interference-limited multi-antenna cellular systems,''
  \emph{IEEE Trans. Wireless Commun.}, vol.~15, no.~8, pp. 5748--5762, 2016.

\bibitem{park:wcl:16}
J.~Park and R.~W. Heath, ``Multiple-antenna transmission with limited feedback
  in device-to-device networks,'' \emph{IEEE Wireless Commun. Lett.}, vol.~5,
  no.~2, pp. 200--203, 2016.

\bibitem{christensen:twc:08}
S.~S. {Christensen}, R.~{Agarwal}, E.~{Carvalho}, and J.~M. {Cioffi},
  ``Weighted sum-rate maximization using weighted {MMSE} for {MIMO-BC}
  beamforming design,'' \emph{IEEE Trans. Wireless Commun.}, vol.~7, no.~12,
  pp. 4792--4799, Dec. 2008.

\bibitem{yalcin:tcom:19}
A.~Z. {Yalcin} and M.~{Yuksel}, ``Precoder design for multi-group multicasting
  with a common message,'' \emph{IEEE Trans. Commun.}, vol.~67, no.~10, pp.
  7302--7315, Oct. 2019.

\bibitem{erceg:jsac:99}
V.~{Erceg}, L.~J. {Greenstein}, S.~Y. {Tjandra}, S.~R. {Parkoff}, A.~{Gupta},
  B.~{Kulic}, A.~A. {Julius}, and R.~{Bianchi}, ``An empirically based path
  loss model for wireless channels in suburban environments,'' \emph{IEEE J.
  Sel. Areas Commun.}, vol.~17, no.~7, pp. 1205--1211, 1999.

\end{thebibliography}

\end{document}